\documentclass[a4paper, 12pt]{article}

\usepackage[utf8]{inputenc}
\usepackage[T1]{fontenc}
\usepackage[english]{babel}
\usepackage[affil-it]{authblk}
\usepackage{etex}
\usepackage{xcolor}
\usepackage{tabularx}
\usepackage{numprint}
\usepackage{mathrsfs}
\usepackage{mathtools}
\usepackage{amsmath}
\usepackage{amssymb}
\usepackage{cancel}
\usepackage{enumerate}
\usepackage{bussproofs}
\usepackage{proof}
\usepackage[ND,SEQ]{prftree}
\usepackage{qtree}
\usepackage[colorlinks,citecolor=red,urlcolor=blue,bookmarks=false,hypertexnames=true]{hyperref} 
\usepackage{pdflscape}
\usepackage{hyphenat}
\usepackage{rotating}
\usepackage{relsize}
\usepackage{natbib}
\usepackage{multirow}
\usepackage{dashbox}
\usepackage{geometry}
\usepackage{changepage}
\usepackage{xspace}
\usepackage{algorithm}
\usepackage{algorithmic}
\usepackage[disable,colorinlistoftodos,prependcaption,textsize=footnotesize]{todonotes}
\newcommand{\imply}{\ensuremath{\rightarrow}}

\newcommand{\ipl}{\ensuremath{\mathbf{Int}}\xspace}
\newcommand{\mpl}{\ensuremath{\mathbf{Min}}\xspace}
\newcommand{\mil}{\ensuremath{\mathbf{M}^{\rightarrow}}\xspace}

\newcommand{\lk}{\ensuremath{\mathbf{LK}}\xspace}
\newcommand{\lj}{\ensuremath{\mathbf{LJ}}\xspace}

\newcommand{\ljarrow}{\ensuremath{\mathbf{LJ^{\rightarrow}}}\xspace}

\newcommand{\lmt}{\ensuremath{\mathbf{LMT^{\rightarrow}}}\xspace}

\newcommand{\SEQ}{\ensuremath{\Rightarrow}}
\newcommand{\Seq}[2]{\ensuremath{#1 \SEQ #2}}

\newenvironment{dedsystem}
{
 \newcolumntype{Y}{>{\centering\arraybackslash}X}
 \setlength{\extrarowheight}{3ex}
 \tabularx{\hsize}{|Y|}}
{\endtabularx}

\newenvironment{blockquote}{%
\par%
\medskip
\leftskip=4em\rightskip=2em%
\noindent\ignorespaces}{%
\par\medskip}

\newlength{\overwritelength}
\newlength{\minimumoverwritelength}
\setlength{\minimumoverwritelength}{3.5cm}
\newcommand{\overwrite}[3]{%
  \settowidth{\overwritelength}{$#2$}%
  \ifdim\overwritelength<\minimumoverwritelength%
  \setlength{\overwritelength}{\minimumoverwritelength}\fi%
  \stackrel
  {%
    \begin{minipage}{\overwritelength}%
      \color{#1}\centering\small #3\\%
      \rule{1pt}{9pt}%
    \end{minipage}}
  {\colorbox{#1!50}{\color{black}$\displaystyle#2$}}}


\newtheorem{definition}{Definition}
\newtheorem{proposition}{Proposition}
\newtheorem{theorem}{Theorem}
\newtheorem{lemma}{Lemma}
\newtheorem{observation}{Observation}
\newtheorem{proof}{Proof}
\newtheorem{example}[definition]{Example}

\title{A Sequent Calculus Proof Search Procedure and Counter-model Generation
based on Natural Deduction Bounds}

\author[1,3]{Jefferson de Barros Santos%
          \thanks{\texttt{jefferson.santos@fgv.br}; Corresponding author}
       }

\author[2]{Bruno Lopes Vieira%
          \thanks{\texttt{bruno@ic.uff.br}}
       }

\author[3]{Edward Hermann Haeusler%
          \thanks{\texttt{hermann@inf.puc-rio.br}}
       }       

\affil[1]{Escola Brasileira de Administra\c c\~ao P\'ublica e de Empresas \\
       Funda\c c\~ao Get\'ulio Vargas \\
       Rio de Janeiro - RJ, Brazil
      }       
\affil[2]{Instituto de Computa\c c\~ao \\
       Universidade Federal Fluminense \\
       Niter\'oi - RJ, Brazil
      }
\affil[3]{Departamento de Inform\'atica \\
       Pontif\'\i cia Universidade Cat\'olica do Rio de Janeiro \\
       Rio de Janeiro - RJ, Brazil
      }

\begin{document}

\maketitle

\begin{abstract}
  In a previously published ENTCS paper~(\cite{santos2016}), we introduced a
  sequent calculus called \lmt for Minimal Implicational Propositional Logic
  (\mil). This calculus provides a proof search procedure for \mil that works
  in a bottom-up approach. We proved there that \lmt is sound and complete. We
  also suggested a strategy to guarantee termination of the proof search
  procedure. In this current paper, we refined this strategy and presented a
  new strategy for~\lmt termination. Considering this new strategy, we also
  provide a (new) completeness proof for the system, which improves the
  previous version. Besides that, we present explicit upper bounds on the
  proof search procedure, derived from this new strategy. We also provide a
  full soundness proof of the system.
\end{abstract}

{\bf Keywords:} Logic; Propositional Minimal Implicational Logic; Sequent Calculus;
Proof Search; Counter-model Generation

\section{Introduction}
\label{sec:intro}

Propositional Minimal Implicational Logic (\mil) is the fragment of the
Propositional Minimal Logic (\mpl) containing only the logical connective
\imply.

The $TAUT$ problem for \mil is the general problem of deciding if a formula
$\alpha \in \mil$ is always true. $TAUT$ is a PSPACE-Complete problem as stated
by~\cite{statman1974}, who also shows that this logic polynomially simulates
Propositional Intuitionistic Logic. Statman's simulation can also be used to
simulate Propositional Classical Logic polynomially.

Furthermore,
\cite{haeusler2014a} shows that \mil can polynomially simulate not only
Propositional Classical and Intuitionistic Logic but also the full Propositional
Minimal Logic and any other decidable propositional logic with a Natural
Deduction system where the Subformula Principle holds~(see \cite{prawitz1965}).

Moreover, \mil has a strong relation with open questions about the Computational
Complexity Hierarchy, as we can see from the statements below.

\begin{itemize}
\item If $CoNP \neq NP$ then $NP \neq P$.

\item If $PSPACE = NP$ then $CoNP = NP$.
  
\item $CoNP=NP$, iff $\exists DS$, a deductive
  system, such that $\forall\alpha\in$ $TAUT_{Cla}$, there is $\Pi$, a proof of $\alpha$ in $DS$, $size(\Pi_{DS}) \leq
  Poly(|\alpha|)$ and the fact that $\Pi$ is a proof of $\alpha$ is verifiable
  in polynomial time on the size of $\alpha$.

\item $PSPACE=NP$ iff $\exists DS$, a deductive
  system, such that $\forall\alpha\in$ $TAUT_{\mil}$, there is $\Pi$, a proof of $\alpha$ in $DS$, $size(\Pi_{DS}) \leq
  Poly(|\alpha|)$ and the fact that $\Pi$ is a proof of $\alpha$ is verifiable
  in polynomial time on the size of $\alpha$.
\end{itemize}

Those characteristics show us that \mil is as hard to implement as the most
popular propositional logics. This fact, together with its straightforward language
(only one logical connective), makes \mil a research object that can
provide us with many insights about the complexity class relationships
mentioned above and about the complexity of many other logics. Moreover, the
problem of conducting a proof search in a deductive system for~\mil has the
complexity of $TAUT$ as a lower bound. Thus, the size of propositional proofs
may be huge, and \emph{automated theorem provers} (ATP) should take care of
super-polynomially sized proofs. Therefore, the study of deductive systems
for~\mil can directly influence in techniques to improve the way provers manage
such proofs.

In \cite{santos2016}, we presented a sound and complete \emph{sequent calculus}
for \mil. We named it \lmt. This calculus establishes a bottom-up approach for
proof search in \mil using a unified procedure either for provability and
counter-model generation. \lmt was designed to avoid the usage of \emph{loop
  checkers} and mechanisms for \emph{backtracking} in its implementation.
Counter-model generation (using Kripke semantics) is achieved as a consequence
of the way the tree (produced by a failed proof search) is
constructed during a proof search process.

In this current work, we present an upper bound to the proof search procedure of
\lmt via translation functions from very known deductive systems for \mil,
Prawitz's Natural Deduction (\cite{prawitz1965}) and Gentzen's sequent calculus
(\cite{gentzen1935}). These translation functions together with a strategy to
apply the rules of \lmt~provide termination for the proof search procedure.
Besides that, we show here all counter-model generation cases, including those
missed in \cite{santos2016}.

We start in the next section with a brief discussion on the syntax and
semantics of \mil used through this text. Section 3 discusses some related work
and state of the art in the field of deductive systems for \mil. In Section
4 we present a study about the size of proofs in \mil~ to establish a bound for
proof search that can be used as a limit in the termination procedure of \lmt.
Section 5 presents \lmt~itself and its main features: termination, soundness,
and completeness (with the counter-model generation as a corollary). Section 6
concludes the paper discussing some open problems and future work.

\section{Minimal Implicational Logic}
\label{sec:mimp}

We can formally define the language $\mathcal{L}$ for \mil as follows. 

\begin{definition}
  The \emph{alphabet} of $\mathcal{L}$ consists of:
  \begin{itemize}
  \item An enumerable set of propositional symbols, called atoms.
  \item The binary connective (or logical operator) for implication ($\imply$).
  \item parentheses: ``$($'' and ``$)$''.
  \end{itemize}
\end{definition}

\begin{definition}
  We can define the general notion of a \emph{formula} in $\mathcal{L}$
  inductively:
  \begin{itemize}
  \item Every propositional symbol is a formula in $\mathcal{L}$. We call them
    atomic formulas.
  \item If $A$ and $B$ are formulas in $\mathcal{L}$ then $(A \imply B)$ are
    also.
  \end{itemize}
\end{definition}

As usual, parentheses are used for disambiguation. We use the following
conventions through the text:
\begin{itemize}
\item Upper case letters to represent atomic formulas: $A$, $B$, $C$, \ldots
\item Lower Greek letters to represent generic formulas: $\alpha$, $\beta$,
  \ldots
\item Upper case Greek letters are used to represent sets of formulas. For
  example $\Delta$, $\Gamma$, \ldots
\item If parentheses are omitted, implications are interpreted right nested.
\end{itemize}

The semantics of \mil is the intuitionistic semantics restricted to
\imply~on\-ly. Thus, given a propositional language $\mathcal{L}$, a \mil model
is a structure $\left< U,\preceq, \mathcal{V} \right>$, where $U$ is a non-empty
set (worlds), $\preceq$ is a partial order relation on $U$ and $\mathcal{V}$ is
a function from $U$ into the power set of $\mathcal{L}$, such that if $i,j\in U$
and $i\preceq j$ then $\mathcal{V}(i)\subseteq\mathcal{V}(j)$. Given a model,
the satisfaction relationship $\models$ between worlds in models and formulas is
defined as in Intuitionistic Logic, namely:
\begin{itemize}
\item $\left< U,\preceq, \mathcal{V} \right>\models_{i}p$, $p\in\mathcal{L}$,
  iff, $p\in\mathcal{V}(i)$
\item $\left< U,\preceq, \mathcal{V} \right>\models_{i}\alpha_1\imply\alpha_2$,
  iff, for every $j\in U$, such that $i\preceq j$, if $\left< U,\preceq,
    \mathcal{V} \right>\models_{j}\alpha_1$ then $\left< U,\preceq, \mathcal{V}
  \right>\models_{j}\alpha_2$.
\end{itemize}

As usual a formula $\alpha$ is valid in a model $\mathcal{M}$, namely
$\mathcal{M}\models\alpha$, if and only if, it is satisfiable in every world $i$
of the model, namely $\forall i\in U, \mathcal{M}\models_{i}\alpha$. A formula
is a \mil tautology, if and only if, it is valid in every model.

\section{Related Work}
\label{sec:related-work}

It is known that Prawitz's Natural Deduction System for Propositional Minimal
Logic with only the \imply-rules (\imply-Elim and \imply-Intro) is sound and
complete for the \mil regarding Kripke semantics. As a consequence of this,
Gentzen's \lj system (\cite{gentzen1935}) containing only right and left
\imply-rules is also sound and complete.

In~\cite{gentzen1935}~, Gentzen proved the decidability of the Propositional
Intuitionistic Logic (\ipl), which also includes the cases for \mpl and \mil.

However, Gentzen's approach was not conceived to be a bottom-up proof search
procedure. Figure~\ref{fig:gentzen} shows structural and logic rules of an
adapted Gentzen's sequent calculus for \mil, called \ljarrow. We restrict the
right side of a sequent to one and only one formula (we are in \mil; thus,
sequents with an empty right side does not make sense). This restriction implies
that structural rules can only be considered for main formulas on the left side
of a sequent. \ljarrow incorporates contraction in the \imply-left, with the
repetition of the main formula of the conclusion on the premises. We use those
adaptations to explain the difficulties in using sequent calculus systems for
proof search in \ipl, \mpl, and \mil. In~\cite{dyckhoff2016}, Dyckhoff describes
in detail the evolution of those adaptations over Gentzen's original~\lj system.
Those adaptations are attempting to improve bottom-up proof search mechanisms
for \ipl.

A central aspect when considering mechanisms for proof search in \mil (and also
for \ipl) is the application of the \imply-left rule. The \lk system proposed by
\cite{gentzen1935}, the sequent calculus for Classical Logic, with some
adaptations (e.g., \cite{seldin1998}) can ensure that each rule reduces the
degree (the number of atomic symbols occurrences and connectives in a formula)
of the main formula of the sequent (the formula to which the rule is applied),
when applied in a bottom-up manner during the proof search. This fact implies
the termination of the system. However, the case for \ipl is more complicated.
First, we have the ``context-splitting'' (using an expression from
\cite{dyckhoff2016}) nature of \imply-left, i.e., the formula on the right side
of the conclusion sequent is lost in the left premise of the rule application.
Second, as we can reuse a hypothesis in different parts of a proof, the main
formula of the conclusion must be available to be used again by the generated
premises. Thus, the \imply-left rule has the repetition of the main formula in
the premises, a scenario that allows the occurrence of loops in automatic
procedures.

\begin{figure}[htb]
\begin{dedsystem}
\hline
  \AxiomC{}
  \RightLabel{axiom}
  \UnaryInfC{\Seq{\Delta, \gamma}{\gamma}}
\DisplayProof \\
  \AxiomC{\Seq{\Delta}{\gamma}}
  \RightLabel{weakening (w)}
  \UnaryInfC{\Seq{\alpha, \Delta}{\gamma}}
\DisplayProof \hspace{3em}
  \AxiomC{\Seq{\alpha, \alpha, \Delta}{\gamma}}
  \RightLabel{contraction (c)}
  \UnaryInfC{\Seq{\alpha, \Delta}{\gamma}}
\DisplayProof \\
  \AxiomC{\Seq{\Gamma, \alpha, \beta, \Delta}{\gamma}}
  \RightLabel{exchange (e)}
  \UnaryInfC{\Seq{\Gamma, \beta, \alpha, \Delta}{\gamma}}
\DisplayProof \hspace{3em}
  \AxiomC{\Seq{\Delta}{\alpha}}
  \AxiomC{\Seq{\alpha, \Gamma}{\gamma}}
  \RightLabel{cut}
  \BinaryInfC{\Seq{\Delta, \Gamma}{\gamma}}
\DisplayProof \\
  \AxiomC{\Seq{\Delta, \alpha}{\beta}}
  \RightLabel{$\imply$-right ($\imply$-r)}
  \UnaryInfC{\Seq{\Delta}{\alpha \imply \beta}}
\DisplayProof \\
  \AxiomC{\Seq{\Delta, \alpha \imply \beta}{\alpha}}
  \AxiomC{\Seq{\Delta, \alpha \imply \beta, \beta}{\gamma}}
  \RightLabel{$\imply$-left ($\imply$-l)}
  \BinaryInfC{\Seq{\Delta, \alpha \imply \beta}{\gamma}}
  \DisplayProof \\ [3ex]\hline
\end{dedsystem}
\caption{Rules of Gentzen's \lj}
\label{fig:gentzen}
\end{figure}

Since Gentzen, many others have explored solutions to deal with the challenges
mentioned above, proposing new calculi (sets of rules), strategies and proof
search procedures to allow more automated treatment to the problem.

Unfortunately, the majority of these results are focused on \ipl, with very few
works dedicated explicitly to \mpl or \mil. Thus, we needed to concentrate our
literature review in the \ipl case, adjusting the found results to the
\mil context by ourselves.

A crucial source of information was the work of \cite{dyckhoff2016}, appropriately
entitled \emph{``Intuitionistic decision procedures since Gentzen''} that
summarizes in chronological order the main results of this field. In the next paragraphs, we
highlight the most important of those results presented in \cite{dyckhoff2016}.

A common way to control the proof search procedure in \mil (and in \ipl) is by
the definition of routines for loop verification as proposed
in~\cite{underwood1990}. Dyckhoff present loop checkers as very expensive
procedures for automatic reasoning (\cite{dyckhoff2016}), although they are
effective to guarantee termination of proof search procedures. The work in
\cite{heuerding1996} and \cite{howe1997} are examples of techniques that can be
used to minimize the performance problems that can arise with the usage of such
procedures.

To avoid the use of loop checkers, \cite{dyckhoff1992} proposed a terminating
contraction-free sequent calculus for \ipl, named \ensuremath{\mathbf{LJT}},
using a technique based on the work of ~\cite{vorobev1970} in the 50s.
\cite{pinto1995} extended this work showing a method to generate
counter-examples in this system. They proposed two calculi, one for proof search
and another for counter-model generation, forming a way to decide about the
validity or not of formulas in \ipl. A characteristic of their systems is that
the subformula property does not hold on them. In \cite{ferrari2013}, a similar
approach is presented using systems where the subformula property holds. They
also proposed a single decision procedure for \ipl, which guarantees minimal depth
counter-model.

Focused sequent calculi appeared initially in the Andreoli's work on linear
logic \cite{andreoli1992}. The author identified a subset of proofs from
Gentzen-style sequent calculus, which is complete and tractable.
\cite{liang2007} proposed the focused sequent calculi \ensuremath{\mathbf{LJF}}
where they used a mapping of \ipl into linear logic and adapted the Andreoli's
system to work with the image. \cite{dyckhoff2006} presented the focused system
\ensuremath{\mathbf{LJQ}} that work direct in \ipl. Focusing is used in their
system as a way to implement restrictions in the \imply-left rule as proposed by
\cite{vorobev1970} and \cite{hudelmaier1993}. The work of \cite{dyckhoff2006}
follows from the calculus with the same name presented in~\cite{herbelin1995}.

\cite{dyckhoff2016} also identify a list of features particularly of interest
when evaluating mechanisms for proof search in \ipl that we will follow when
comparing our solution to the other existent ones. They are:
\textbf{termination} (proof search procedure stops both for theorems and
non-theorem formulas), \textbf{bicompleteness} (extractability of models from
failed proof searches), \textbf{avoidance of backtracking} (backtracking being a
very immediate approach to deal with the context split in \imply-left, but it is
also a complex procedure to implement), \textbf{simplicity} (allows easier
reasoning about systems).

\section{The Size of Proofs in \mil} 
\label{sec:proofsize}

Hirokawa has presented an upper bound for the size of normal form Natural
Deduction proofs of implicational formulas in~\ipl~ (that correspond to
\mil~formulas). \cite{Hirokawa1991} showed that for a formula $\alpha \in
\mil$, this limit is $|\alpha| \cdot 2^{|\alpha| + 1}$. As the Hirokawa result
concerns normal proofs in Natural Deduction we present now a translation of this
system to a cut-free sequent calculus, following the \ljarrow~rules presented in
Section~\ref{sec:related-work}, thus we can establish the limit for proof search
in \ljarrow~too.

Figure~\ref{fig:nd2lj} presents a recursively defined function\footnote{We use a
  semicolon to separate arguments of functions (in function definitions and
  function calls) instead of the most common approach to using commas. This
  change in convention aims to avoid confusion with the commas used to separate
  formulas and sets of formulas in sequent notation.} to translate Natural
Deduction normal proofs of \mil~formulas into \ljarrow~proofs (in a version of
the system without the cut rule).

\begin{figure}[htb]
\begin{dedsystem}
\hline
{\small
\textbf{Axioms:}
  
\begin{equation*}
  F(\alpha; \Gamma) = \Seq{\Gamma, \alpha}{\alpha}
\end{equation*}

\bigskip

\textbf{Case of \imply Introduction:}
\begin{eqnarray*}
\begin{minipage}{0.5\textwidth}
  \begin{equation*}
  F\left(
    \alwaysNoLine
    \AxiomC{[$\alpha]^1$}
    \def\extraVskip{2pt}
    \UnaryInfC{$\prod$}
    \alwaysSingleLine
    \UnaryInfC{$\beta$}
    \RightLabel{$\imply$-I$^1$}
    \UnaryInfC{$\alpha\imply\beta$}
    \DisplayProof;
    \Gamma
  \right)
  \end{equation*}
\end{minipage} & \hspace{-.13\textwidth} = &
\begin{minipage}{0.5\textwidth}  
  \prftree[r]{$\imply\mathrm{r}$}
                {F\left(
                   \alwaysNoLine
                   \AxiomC{$\alpha$}
                   \def\extraVskip{2pt}
                   \UnaryInfC{$\prod$}
                   \alwaysSingleLine
                   \UnaryInfC{$\beta$}
                   \DisplayProof;
                   \{\alpha\} \cup \Gamma
                \right)}
                {\Seq{\Gamma}{\alpha\imply\beta}}
\end{minipage}
\end{eqnarray*}

\bigskip

\textbf{Case of \imply Elimination:}
\begin{equation*}
  F\left(
    \alwaysNoLine
    \def\extraVskip{2pt}
    \AxiomC{$\prod_1$}
    \UnaryInfC{$\alpha$}
    \alwaysSingleLine
              \AxiomC{$\alpha\imply\beta$}
    \BinaryInfC{$\beta$}
    \alwaysNoLine
    \UnaryInfC{$\prod_2$}
    \UnaryInfC{$C$}    
    \DisplayProof;
    \Gamma
  \right) =
  \end{equation*}
}

{\footnotesize
\begin{minipage}{1\textwidth}\vspace{.2cm}
  \prfinterspace=5em             
  \prftree[r]{$\imply\mathrm{l}$}
             {F\left(
               \alwaysNoLine
               \AxiomC{$\prod_1$}
               \def\extraVskip{2pt}
               \UnaryInfC{$\alpha$}
               \DisplayProof;
               \{\alpha \imply \beta\} \cup \Gamma
             \right)}
             {F\left(
               \alwaysNoLine
               \AxiomC{$\beta$}
               \def\extraVskip{2pt}
               \UnaryInfC{$\prod_2$}
               \UnaryInfC{$C$}
               \DisplayProof;
               \{\alpha \imply \beta\} \cup \Gamma
             \right)}  
            {
             \imply-lc\left(
                                  c\left(
                                  F\left(
                                     \alwaysNoLine
                                     \AxiomC{$\prod_1$}
                                     \def\extraVskip{2pt}
                                     \UnaryInfC{$\alpha$}
                                     \DisplayProof;
                                     \{\alpha \imply \beta\} \cup \Gamma
                                     \right)
                                  \right);
                                  c\left(
                                  F\left(
                                     \alwaysNoLine
                                     \AxiomC{$\beta$}
                                     \def\extraVskip{2pt}
                                     \UnaryInfC{$\prod_2$}
                                     \UnaryInfC{$C$}
                                     \DisplayProof;
                                     \{\alpha \imply \beta\} \cup \Gamma
                                     \right)
                                  \right);
                                  \alpha\imply\beta
                                \right)}
\end{minipage}
}

\\ [3ex]\hline
\end{dedsystem}
\caption{A recursively defined function to translate Natural Deduction proofs into \ljarrow}
\label{fig:nd2lj}
\end{figure}

In this definition, $c$ is a function that returns the conclusion (last sequent)
of a \ljarrow~demonstration as showed in~\eqref{eq:c-function}. Also,
$\imply-lc$ is a function that receives two \ljarrow~sequents and a formula to
construct the conclusion of a $\imply-left$ rule application, as defined
in~\eqref{eq:imply-lc-function}.

\bigskip\bigskip

\begin{equation}
\label{eq:c-function}
   c\left(
      \alwaysNoLine
      \AxiomC{$\prod$}
      \def\extraVskip{2pt}
      \UnaryInfC{$\Seq{\Gamma}{\gamma}$}
      \DisplayProof
  \right) = \Seq{\Gamma}{\gamma}
\end{equation}

\bigskip

\begin{equation}
\label{eq:imply-lc-function}
   \imply-lc(\Seq{\Gamma}{\alpha}; \Seq{\beta, \Gamma}{\gamma}; \alpha\imply\beta) = \Seq{\Gamma, \alpha\imply\beta}{\gamma}
 \end{equation}

\newpage
 
\begin{proposition}
  Let
  \\
    \alwaysNoLine
    \AxiomC{$\Gamma$}
    \UnaryInfC{$\prod$}
    \UnaryInfC{$\alpha$}
    \DisplayProof, a normal Natural Deduction derivation and $\Gamma$, the set
    of undischarged formulas in $\Gamma$ then,

    $$
    F\left(
      \alwaysNoLine
      \AxiomC{$\Gamma$}
      \UnaryInfC{$\prod$}
      \UnaryInfC{$\alpha$}
      \DisplayProof;
      \emptyset
    \right) =
      \alwaysNoLine
      \AxiomC{$\prod'$}
      \UnaryInfC{$\Seq{\Gamma}{\alpha}$}
      \DisplayProof
    $$
    is a proof in \ljarrow.  
\end{proposition}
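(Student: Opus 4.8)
The plan is to argue by structural induction on the normal Natural Deduction derivation $\Pi$. The function $F$ is defined by recursion on exactly this structure (three cases: axiom/hypothesis leaf, $\imply$-introduction, $\imply$-elimination), so the induction will follow the shape of that definition. The induction hypothesis will be the statement strengthened to an arbitrary context parameter: for every normal derivation $\Pi$ with conclusion $\alpha$ and set of undischarged assumptions $\Gamma_\Pi$, and for every finite set of formulas $\Delta$, the object $F(\Pi;\Delta)$ is a well-formed \ljarrow-derivation whose end-sequent is $\Seq{\Gamma_\Pi\cup\Delta}{\alpha}$. Taking $\Delta=\emptyset$ then recovers the proposition as stated. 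Strengthening the statement this way is essential, since in both recursive cases $F$ is invoked with an enlarged context ($\{\alpha\}\cup\Gamma$ in the introduction case, $\{\alpha\imply\beta\}\cup\Gamma$ in the elimination case).

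\medskip

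\noindent First I would handle the base case: if $\Pi$ is a single assumption occurrence $\alpha$, then $\Gamma_\Pi=\{\alpha\}$ and $F(\alpha;\Delta)=\Seq{\Delta,\alpha}{\alpha}$, which is literally an instance of the \textbf{axiom} rule, with end-sequent $\Seq{\{\alpha\}\cup\Delta}{\alpha}$ as required (reading the multiset/set notation up to the structural rules $w$, $e$). For the $\imply$-introduction case, $\Pi$ ends by discharging $[\alpha]$ from a subderivation $\Pi_0$ of $\beta$; by the IH applied to $\Pi_0$ with context $\{\alpha\}\cup\Delta$, $F(\Pi_0;\{\alpha\}\cup\Delta)$ is an \ljarrow-derivation of $\Seq{\Gamma_{\Pi_0}\cup\{\alpha\}\cup\Delta}{\beta}$; since $\Gamma_\Pi=\Gamma_{\Pi_0}\setminus\{\alpha\}$, one application of $\imply$-r yields $\Seq{\Gamma_\Pi\cup\Delta}{\alpha\imply\beta}$ (again modulo the obvious bookkeeping of whether $\alpha$ was actually used, handled by weakening). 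This matches the \prftree clause in Figure~\ref{fig:nd2lj}.

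\medskip

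\noindent The $\imply$-elimination case is the substantive one. Here normality is used: in a normal derivation, the major premise $\alpha\imply\beta$ of an $\imply$-E that is not itself the conclusion must be (the end of) a sequence of eliminations starting from an assumption, so $\alpha\imply\beta\in\Gamma_\Pi$ — this is what licenses putting $\alpha\imply\beta$ into the context of both recursive calls and then discharging nothing new. Concretely, $\Pi$ decomposes as a derivation $\Pi_1$ of $\alpha$, an elimination producing $\beta$, and a derivation $\Pi_2$ continuing from $\beta$ to the final conclusion $C$. By the IH, $F(\Pi_1;\{\alpha\imply\beta\}\cup\Delta)$ derives $\Seq{\Gamma_{\Pi_1}\cup\{\alpha\imply\beta\}\cup\Delta}{\alpha}$ and $F(\Pi_2;\{\alpha\imply\beta\}\cup\Delta)$ derives $\Seq{\Gamma_{\Pi_2}\cup\{\beta\}\cup\{\alpha\imply\beta\}\cup\Delta}{C}$. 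I would then check that the two auxiliary functions do what is needed: $c$ extracts these end-sequents via \eqref{eq:c-function}, and $\imply$-lc assembles them by \eqref{eq:imply-lc-function} into $\Seq{\Gamma\cup\{\alpha\imply\beta\}}{C}$, which is exactly the conclusion of an $\imply$-l inference applied to those two premises (the repeated principal formula $\alpha\imply\beta$ in the premises being precisely why it had to be inserted into the contexts). One subtlety to dispatch carefully is the reconciliation of undischarged-assumption sets: $\Gamma_\Pi=\Gamma_{\Pi_1}\cup\Gamma_{\Pi_2}$, and the $\imply$-l rule as displayed in Figure~\ref{fig:gentzen} has a shared context $\Delta$ on both premises, so one must either phrase \ljarrow with implicit weakening or interleave explicit $w$-steps so that both recursive results speak about the same context before $\imply$-l is applied.

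\medskip

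\noindent \textbf{Main obstacle.} I expect the real work to be in the $\imply$-E case, specifically (i) invoking normality correctly to guarantee $\alpha\imply\beta$ is available as an undischarged assumption of the whole subderivation rooted at the elimination — i.e. that the ``track'' structure of normal derivations places this elimination on a branch where its major premise traces back to a hypothesis, and (ii) the purely clerical but error-prone matching of contexts (sets vs.\ multisets, implicit vs.\ explicit contraction/weakening) so that the output of $\imply$-lc is literally the end-sequent of a legal $\imply$-l instance. Everything else — the base case and the $\imply$-I case — is routine once the generalized induction hypothesis with the extra context parameter $\Delta$ is in place.
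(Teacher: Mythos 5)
Your proposal is correct and follows essentially the same route as the paper, whose entire proof is the one-liner ``by induction in the size of $\prod$''; your strengthened induction hypothesis with the extra context parameter and the case analysis mirroring the three clauses of $F$ is just that induction spelled out in detail. The only nuance worth noting is that in the $\imply$-elimination case the second recursive call is not on a literal subderivation but on the smaller derivation obtained by replacing the subtree above $\beta$ with the assumption $\beta$, so the well-founded measure is really the size of the derivation (as the paper says) rather than strict subterm structure.
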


\begin{proof}
    By induction in the size of $\prod$.
\end{proof}

As an example of the translation produced by the function of
Figure~\ref{fig:nd2lj}, we show below each step of the translation of a Natural
Deduction proof (\ref{proof:nd}) into an \ljarrow~proof (\ref{proof:lj}). To
shorten the size of the proofs we collapsed repeated occurrences of formulas
when passed as the hypothesis argument of the recursive function call.

\vspace{3em}

{\footnotesize
\begin{equation}
\label{proof:nd}    
      \alwaysSingleLine
      \def\extraVskip{2pt}
      \AxiomC{$[B]^2$}
                    \AxiomC{$[A]^1$}      
                                       \AxiomC{$[A \imply (B \imply C)]^3$}       
                    \RightLabel{$\imply$-E}      
                    \BinaryInfC{$B \imply C$}      
      \RightLabel{$\imply$-E}            
      \BinaryInfC{$C$}      
      \RightLabel{$\imply$-I$^1$}            
      \UnaryInfC{$A \imply C$}      
      \RightLabel{$\imply$-I$^2$}      
      \UnaryInfC{$B \imply (A \imply C)$}
      \RightLabel{$\imply$-I$^3$}
      \UnaryInfC{$(A \imply (B \imply C)) \imply (B \imply (A \imply C))$}
      \DisplayProof        
\end{equation}

\begin{center}
$\mathlarger{\mathlarger{\triangledown}}$  
\end{center}

\begin{displaymath}
  F\left(
      \alwaysNoLine
      \AxiomC{$[A \imply (B \imply C)]^3$}
      \def\extraVskip{2pt}
      \UnaryInfC{$\prod_1$}
      \UnaryInfC{$B \imply (A \imply C)$}
      \alwaysSingleLine
      \RightLabel{$\imply$-I$^3$}
      \UnaryInfC{$(A \imply (B \imply C)) \imply (B \imply (A \imply C))$}
      \DisplayProof;
      \emptyset
  \right)
\end{displaymath}

\begin{center}
$\mathlarger{\mathlarger{\triangledown}}$  
\end{center}

\begin{displaymath}
  \prftree[r]{$\imply\mathrm{r}$}
                {F\left(
                      \alwaysNoLine
                      \AxiomC{$[B]^2, A \imply (B \imply C)$}
                      \def\extraVskip{2pt}
                      \UnaryInfC{$\prod_2$}
                      \UnaryInfC{$A \imply C$}
                      \alwaysSingleLine
                      \RightLabel{$\imply$-I$^2$}
                      \UnaryInfC{$B \imply (A \imply C)$}
                      \DisplayProof;
                      \{A \imply (B \imply C)\}
                  \right)}
                {\Seq{}{(A \imply (B \imply C)) \imply (B \imply (A \imply C))}}       
\end{displaymath}

\begin{center}
$\mathlarger{\mathlarger{\triangledown}}$  
\end{center}

\begin{displaymath}
  \prftree[r]{$\imply\mathrm{r}$}
  {
        \prftree[r]{$\imply\mathrm{r}$}
                {F\left(
                      \alwaysNoLine
                      \AxiomC{$[A]^1, B, A \imply (B \imply C)$}
                      \def\extraVskip{2pt}
                      \UnaryInfC{$\prod_3$}
                      \UnaryInfC{$C$}
                      \alwaysSingleLine
                      \RightLabel{$\imply$-I$^1$}
                      \UnaryInfC{$A \imply C$}
                      \DisplayProof;
                      \{B, A \imply (B \imply C)\}
                  \right)}
                {\Seq{A \imply (B \imply C)}{B \imply (A \imply C)}}
  }
  {\Seq{}{(A \imply (B \imply C)) \imply (B \imply (A \imply C))}}       
\end{displaymath}

\begin{center}
$\mathlarger{\mathlarger{\triangledown}}$  
\end{center}

\begin{displaymath}
  \prftree[r]{$\imply\mathrm{r}$}
  {  
    \prftree[r]{$\imply\mathrm{r}$}
    {
        \prftree[r]{$\imply\mathrm{r}$}
                {F\left(
                      \alwaysSingleLine
                      \AxiomC{$B$}
                              \AxiomC{$A$}     \AxiomC{$A \imply (B \imply C)$}
                                  \RightLabel{$\imply$-E} 
                                  \BinaryInfC{$B \imply C$}
                      \RightLabel{$\imply$-E}
                      \BinaryInfC{$C$}
                      \DisplayProof;
                      \{A, B, A \imply (B \imply C)\}
                  \right)}
                {\Seq{B, A \imply (B \imply C)}{A \imply C}}
    }
    {\Seq{A \imply (B \imply C)}{B \imply (A \imply C)}}
  }
  {\Seq{}{(A \imply (B \imply C)) \imply (B \imply (A \imply C))}}       
\end{displaymath}
}

In the following steps consider that $\Gamma = \{A, B, A \imply (B \imply C)\}$.

{\footnotesize
\begin{center}
$\mathlarger{\mathlarger{\triangledown}}$  
\end{center}

\begin{displaymath}
  \prftree[r]{$\imply\mathrm{r}$}
  {  
    \prftree[r]{$\imply\mathrm{r}$}
    {
      \prftree[r]{$\imply\mathrm{r}$}
            {
             \prfinterspace=8em
             \prftree[r]{$\imply\mathrm{l}$}
                {F(A; \Gamma)}
                {F\left(
                      \alwaysSingleLine
                      \AxiomC{$B$}
                                  \AxiomC{$B \imply C$}
                      \RightLabel{$\imply$-E}
                      \BinaryInfC{$C$}
                      \DisplayProof;
                      \Gamma
                   \right)}
                {\imply-lc\left(
                                  c\left(
                                  F(A; \Gamma)
                                  \right);
                                  c\left(
                                   F\left(
                                         \alwaysSingleLine
                                         \AxiomC{$B$}
                                                \AxiomC{$B \imply C$}
                                         \RightLabel{$\imply$-E}
                                         \BinaryInfC{$C$}
                                         \DisplayProof;
                                         \Gamma
                                     \right)
                                  \right);
                                  A\imply(B \imply C)
                                \right)}
            }
            {\Seq{B, A \imply (B \imply C)}{A \imply C}}
    }
    {\Seq{A \imply (B \imply C)}{B \imply (A \imply C)}}
  }
  {\Seq{}{(A \imply (B \imply C)) \imply (B \imply (A \imply C))}}       
\end{displaymath}

\begin{center}
$\mathlarger{\mathlarger{\triangledown}}$  
\end{center}

\begin{displaymath}
  \prftree[r]{$\imply\mathrm{r}$}
  {  
    \prftree[r]{$\imply\mathrm{r}$}
    {
      \prftree[r]{$\imply\mathrm{r}$}
            {
             \prfinterspace=8em
             \prftree[r]{$\imply\mathrm{l}$}
                {\Seq{\Gamma}{A}}
                {
                  \prftree[r]{$\imply\mathrm{l}$}
                               {F(B; \{\Gamma, B \imply C)\}}
                               {F(C; \{\Gamma, B \imply C)\}}
                               {\imply-lc(c(F(B; \{\Gamma, B \imply C)\});
                                          c(F(C; \{\Gamma, B \imply C)\});
                                          B \imply C)}       
                }                
                {\Seq{\Gamma}{C}}
            }
            {\Seq{B, A \imply (B \imply C)}{A \imply C}}
    }
    {\Seq{A \imply (B \imply C)}{B \imply (A \imply C)}}
  }
  {\Seq{}{(A \imply (B \imply C)) \imply (B \imply (A \imply C))}}       
\end{displaymath}

\begin{center}
$\mathlarger{\mathlarger{\triangledown}}$  
\end{center}

\begin{equation}
  \label{proof:lj}
  \prftree[r]{$\imply\mathrm{r}$}
  {  
    \prftree[r]{$\imply\mathrm{r}$}
    {
      \prftree[r]{$\imply\mathrm{r}$}
            {
             \prfinterspace=8em
             \prftree[r]{$\imply\mathrm{l}$}
                {\Seq{\Gamma}{A}}
                {
                  \prftree[r]{$\imply\mathrm{l}$}
                               {\Seq{\Gamma, B \imply C}{B}}
                               {\Seq{\Gamma, B \imply C, C}{C}}
                               {\Seq{\Gamma, B \imply C}{C}}       
                }                
                {\Seq{\Gamma}{C}}
            }
            {\Seq{B, A \imply (B \imply C)}{A \imply C}}
    }
    {\Seq{A \imply (B \imply C)}{B \imply (A \imply C)}}
  }
  {\Seq{}{(A \imply (B \imply C)) \imply (B \imply (A \imply C))}}       
\end{equation}
}

\begin{theorem}
  \label{theo:upperbound-lj}
  The size of proofs in \ljarrow~considering only implicational tautologies is
  the same of that in Natural Deduction, i.e. for an implicational formula
  $\alpha$, a proof in \ljarrow~has maximum height of $|\alpha| \cdot
  2^{|\alpha| + 1}$.
\end{theorem}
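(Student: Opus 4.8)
The plan is to leverage the translation function $F$ from Figure~\ref{fig:nd2lj} together with the previous Proposition: given any implicational tautology $\alpha$, Hirokawa's result (\cite{Hirokawa1991}) guarantees a normal Natural Deduction derivation $\Pi$ of $\alpha$ whose height is at most $|\alpha| \cdot 2^{|\alpha| + 1}$, and the Proposition tells us that $F(\Pi; \emptyset)$ is an \ljarrow~proof of $\Seq{}{\alpha}$. So it suffices to show that $F$ does not increase height, i.e. that the height of $F(\Pi; \Gamma)$ is bounded by the height of $\Pi$ (for any choice of hypothesis-set $\Gamma$). This is the heart of the argument, and I would prove it by structural induction on $\Pi$.

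For the base case, an assumption leaf $\alpha$ is sent to the axiom $\Seq{\Gamma,\alpha}{\alpha}$, a derivation of height $1$ (or $0$, depending on the convention), matching the height of the leaf. For the $\imply$-introduction case, $F$ wraps the recursive call on the immediate subderivation in a single $\imply\mathrm{r}$ step, so both the \ND~height and the \ljarrow~height grow by exactly one; the induction hypothesis applied to the premise closes this case. The $\imply$-elimination case is the one requiring care: the \ND~rule has two premises, the derivation $\Pi_1$ of $\alpha$ and the derivation $\Pi_2$ whose topmost formula is the minor premise $\beta$ (here I am using the normal-form / "main branch" reading implicit in the figure, where the major premise $\alpha\imply\beta$ is an assumption and the elimination is immediately followed by the rest of the proof $\Pi_2$ ending in $C$). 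The figure's right-hand side is an $\imply\mathrm{l}$ node with the two recursive calls $F(\Pi_1; \{\alpha\imply\beta\}\cup\Gamma)$ and $F(\langle\beta,\Pi_2,C\rangle; \{\alpha\imply\beta\}\cup\Gamma)$ as its premises. So the new height is $1 + \max$ of the two subderivation heights, and by the induction hypothesis each is bounded by the height of the corresponding \ND~subderivation; since in the \ND~proof the $\imply$-E step itself contributes a $+1$ over $\max(\mathrm{ht}(\Pi_1), \mathrm{ht}(\Pi_2 \text{ from }\beta))$, the bound is preserved. One then concludes: $\mathrm{ht}(F(\Pi;\emptyset)) \le \mathrm{ht}(\Pi) \le |\alpha|\cdot 2^{|\alpha|+1}$.

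The main obstacle I anticipate is bookkeeping around the elimination case rather than any deep difficulty: one must be precise about how a normal \ND~derivation is decomposed into the $\Pi_1$ (minor premise subproof) and $\Pi_2$ (the continuation after the elimination) that the function $F$ recurses on, and verify that this decomposition is exhaustive for normal derivations — i.e. that every $\imply$-E in a normal proof has its major premise on the "track" so that the pattern in Figure~\ref{fig:nd2lj} applies. I would also need to check that the auxiliary operations $c$ and $\imply\text{-lc}$ (equations~\eqref{eq:c-function} and~\eqref{eq:imply-lc-function}), plus the implicit contraction steps folded into $\imply\text{-lc}$, do not themselves add to the height — which they do not, since $\imply\text{-lc}$ merely assembles the conclusion sequent of the $\imply\mathrm{l}$ node and the contraction is absorbed into the same rule application (as \ljarrow~builds contraction into $\imply$-left). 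A remark worth including is that this gives a bound on \emph{height}; the bound on the total number of sequents (size) then follows because each \ljarrow~node is at most binary, so size is at most $2^{\mathrm{ht}} $, though the theorem as stated only claims the height bound and that is all the downstream termination argument for \lmt~will need.
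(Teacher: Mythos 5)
Your proposal is correct and follows essentially the same route as the paper: the paper's own (one-line) proof simply observes that the translation function $F$ maps each Natural Deduction step to precisely one \ljarrow~step, so Hirokawa's bound transfers, and your structural induction on $\Pi$ (axiom, $\imply$-I to $\imply$-r, $\imply$-E to $\imply$-l, with $c$ and $\imply$-lc adding no height) is exactly that observation carried out in detail.
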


\begin{proof}
  This proof follows directly from the translation function as each step in the
  Natural Deduction proof is translated into precisely one step in the
  \ljarrow~resultant proof.
\end{proof}

\section{The Sequent Calculus \lmt}
\label{sec:lmt}

In this section, we present a sound and complete \emph{sequent calculus} for
\mil. We call this system \lmt. We can prove for each rule that if all premises
are valid, then the conclusion is also valid, and if at least one premise is
invalid, then the conclusion also is. This proof is constructive, i.e., 
for any sequent, we have an effective way to produce either a
proof or a counter-model of it.

We start defining the concept of \emph{sequent} used in the proposed calculus. A
sequent in our system has the following general form:
\begin{equation}
  \label{eq:sequent}
  \Seq{\{\Delta'\}, \Upsilon_1^{p_1}, \Upsilon_2^{p_2}, ..., \Upsilon_n^{p_n}, \Delta}{[p_1, p_2,..., p_n], \varphi}
\end{equation} 
\noindent where $\varphi$ is a formula in $\mathcal{L}$ and $\Delta$,
$\Upsilon_1^{p_1}, \Upsilon_2^{p_2}, ..., \Upsilon_n^{p_n}$ are bags\footnote{A
  bag (or a multiset) is a generalization of the concept of a set that, unlike a
  set, takes repetitions into account: a bag \{A, A, B\} is not the same as the
  bag \{A, B\}.} of formulas. Each $\Upsilon_i^{p_i}$ represents formulas
associated with an atomic formula $p_i$.

A sequent has two \emph{focus areas}, one in the left side (curly
bracket)\footnote{Note that the symbols $\{$ and $\}$ here do not represent a
  set. Instead, these symbols work as an annotation in the sequent to determine the left side
  focused area. Therefore, $\Delta'$ instead is a set of formulas in the focused
  area.} and another on the right (square bracket). Curly brackets are used to
control the application of the $\imply$-left rule and square brackets are used
to keep control of formulas that are related to a particular counter-model
definition. $\Delta'$ is a set of formulas and ${p_1, p_2,..., p_n}$ is a
sequence that does not allow repetition. We call \emph{context} of the sequent a
pair $(\alpha, q)$, where $\alpha \in \Delta'$ and $\varphi = q$, where $q$ is
an atomic formula on the right side of the sequent.

The axioms and rules of \lmt~are presented in Figure~\ref{fig:lmtrules}. In each
rule, $\Delta' \subseteq \Delta$.

Rules are inspired by their backward application. In a \imply-left rule
application, the atomic formula, $q$, on the right side of the conclusion goes
to the []-area in the left premise. $\Delta$ formulas in the conclusion are
copied to the left premise and marked with a label relating each of them with
$q$. The left premise also has a copy of $\Delta$ formulas without the
$q$-label. This mechanism keeps track of proving attempts. The form of the
restart rule is better understood in the completeness proof on
Section~\ref{sec:completeness}. A forward reading of rules can be achieved by
considering the notion of validity, as described in Section~\ref{sec:soundness}.

\begin{figure}[h]
\begin{dedsystem}
\hline
\textbf{Axiom:} \\
  \AxiomC{}
  \RightLabel{\scriptsize{$axiom$}}
  \UnaryInfC{\Seq{\{\Delta', q\}, \Upsilon_1^{p_1}, \Upsilon_2^{p_2}, \ldots, \Upsilon_n^{p_n}, \Delta}{[p_1, p_2, \ldots, p_n], q}}
  \DisplayProof \\
\textbf{Focus:} \\
  \AxiomC{\Seq{\{\Delta', \alpha\}, \Upsilon_1^{p_1}, \Upsilon_2^{p_2}, \ldots, \Upsilon_n^{p_n}, \Delta, \alpha}{[p_1, p_2, \ldots, p_n], \beta}}
  \RightLabel{\scriptsize{f$_\alpha$}}
  \UnaryInfC{\Seq{\{\Delta'\}, \Upsilon_1^{p_1}, \Upsilon_2^{p_2}, \ldots, \Upsilon_n^{p_n}, \Delta, \alpha}{[p_1, p_2, \ldots, p_n], \beta}}
  \DisplayProof \\
\textbf{Restart:} \\
  \AxiomC{\Seq{\{\}, \Upsilon_1, \Upsilon_2, \ldots, \Upsilon_i, \Upsilon_{i+1}^{p_{i+1}}, \ldots, \Upsilon_n^{p_n}, \Delta^q}{[p_1, p_2, \ldots, p_{i+1}, \ldots, p_n, q], p_i}} 
 \RightLabel{\scriptsize{r$_{p_i}$}}
  \UnaryInfC{\Seq{\{\Delta'\}, \Upsilon_1^{p_1}, \Upsilon_2^{p_2}, \ldots, \Upsilon_i^{p_i}, \Upsilon_{i+1}^{p_{i+1}}, \ldots, \Upsilon_n^{p_n}, \Delta}{[p_1, p_2, \ldots, p_i, p_{i+1}, \ldots, p_n], q}}
  \DisplayProof \\
\imply\textbf{-Right} \\
  \AxiomC{\Seq{\{\Delta'\}, \Upsilon_1^{p_1}, \Upsilon_2^{p_2}, \ldots, \Upsilon_n^{p_n}, \Delta, \alpha}{[p_1, p_2, \ldots, p_n], \beta}}
  \RightLabel{\scriptsize{$\imply$-r$_{\alpha \imply \beta}$}}
  \UnaryInfC{\Seq{\{\Delta'\}, \Upsilon_1^{p_1}, \Upsilon_2^{p_2}, \ldots, \Upsilon_n^{p_n}, \Delta}{[p_1, p_2, \ldots, p_n], \alpha \imply \beta}}
  \DisplayProof \\
\imply\textbf{-Left} \\
Considering $\overline{\Upsilon} = \displaystyle \bigcup_{i=1}^{n}{\Upsilon_i^{p_i}}$ and $\bar{p} = p_1, p_2, \ldots, p_n$, we have:
\\
\AxiomC{\Seq{\{\alpha\imply\beta,\Delta'\},\overline{\Upsilon}, \Delta^q, \Delta}{[\bar{p}, q], \alpha}}
\AxiomC{\Seq{\{\alpha\imply\beta,\Delta'\},\overline{\Upsilon}, \Delta, \beta}{[\bar{p}], q}}
\RightLabel{\scriptsize{$\imply$-l$_{(\alpha \imply \beta, q)}$}}
\BinaryInfC{\Seq{\{\alpha\imply\beta,\Delta'\}\overline{\Upsilon}, \Delta} {[\bar{p}], q}}
\DisplayProof \\ \\ [3ex]\hline
\end{dedsystem}
\caption{Rules of \lmt}
\label{fig:lmtrules}
\end{figure}

\subsection{A Proof Search Strategy}
\label{sec:strategy}

The following is a general strategy to be applied with the rules of \lmt~to
generate proofs from an input sequent (a sequent that is a candidate to be the
conclusion of a proof), which is based on a bottom-up application of the rules.
From the proposed strategy, we can then state a proposition about the
termination of the proving process.

A \emph{goal sequent} is a new sequent in the form of~\eqref{eq:sequent}. It is
a premise of one of the system's rules, generated by the application of this
rule on an open branch during the proving process. If the goal sequent is an
axiom, the branch where it is will stop. Otherwise, apply the first applicable
rule in the following order:

\begin{enumerate}
\item Apply \imply-right rule if it is possible, i.e., if the formula on the
  right side of the sequent, outside the []-area, is not atomic. The premise
  generated by this application is the new goal of this branch.
\item Choose one formula on the left side of the sequent, not labeled yet, i.e.,
  a formula $\alpha \in \Delta$ that is not occurring in $\Delta'$, then apply
  the focus rule. The premise generated by this application is the new goal of
  this branch.
\item If all formulas on the left side have already been focused, choose the
  first formula $\alpha \in \Delta'$ such that the context $(\alpha, q)$ was not
  yet tried since the last application of a restart rule. We say that a context
  $(\alpha, q)$ is already tried when a formula $\alpha$ on the left was
  expanded (by the application of \imply-left rule), with $q$ as the formula
  outside the []-area on the right side of the sequent. The premises generated
  by this application are new goals of the respective new branches.
\item Choose the leftmost formula inside the []-area that was not chosen before
  in this branch and apply the restart rule. The premise generated by this
  application is the new goal of the branch.
\end{enumerate}

\begin{observation}
  \label{obs:strategy}
  From the proof strategy we can make the following observations about a tree
  generated during a proving process:
  \begin{enumerate}[(i)]
  \item\label{obs:strategy-1} A \emph{top sequent} is the highest sequent of a
    branch in the tree.
  \item\label{obs:strategy-2} In a top sequent of a branch on the form of
    sequent \eqref{eq:sequent}, if $\varphi \in \Delta$ then the top sequent is
    an axiom and the branch is called a \emph{closed} branch. Otherwise, we say
    that the branch is \emph{open} and $\varphi$ is an atomic formula.
  \item\label{obs:strategy-3} In every sequent of the tree, $\Delta' \subseteq
    \Delta$.
  \item\label{obs:strategy-4} For $i=1, \ldots n$, $\Upsilon_{i-1}^{p_{i-1}}
    \subseteq \Upsilon_{i}^{p_{i}}$.
  \end{enumerate}
\end{observation}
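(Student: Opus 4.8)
The plan is to regard (i) as simply the definition of ``top sequent'', and to prove (ii)--(iv) by a single induction on the construction of the tree generated by the strategy of Section~\ref{sec:strategy}, the induction being on the number of rule applications between the input sequent (the root) and the sequent under consideration. Since the monotonicity facts needed for (iv) are not literally among (ii)--(iv), I would strengthen the induction hypothesis to also record that, reading a branch from the root upwards, the unlabelled part $\Delta$ of the left-hand side never loses a formula, and that each labelled block $\Upsilon_j^{p_j}$ equals the value that $\Delta$ had at the moment the atom $p_j$ was appended to the list $[\,\bar p\,]$. In the base case the sequent is the input sequent, where $\Delta' = \emptyset$ and there are no $\Upsilon$-blocks, so everything holds trivially; for the inductive step I would fix a sequent $S$, assume the strengthened hypothesis for all sequents strictly below it, and inspect the rule of Figure~\ref{fig:lmtrules} that produced $S$, going through the four cases.

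For (iii) the check is immediate: the axiom has no premise; \imply-right and \imply-left keep the focus area $\Delta'$ unchanged while only enlarging the left-hand side; the focus rule moves into $\Delta'$ a formula $\alpha$ that stays on the left-hand side, so $\Delta' \cup \{\alpha\} \subseteq \Delta$; and restart resets $\Delta'$ to $\emptyset$, for which the inclusion is vacuous. For (ii), note that by the strategy a sequent $S$ of the form~\eqref{eq:sequent} is a top sequent exactly when it is an axiom or no rule of the strategy applies to it. Rule~1 (\imply-right) applies whenever $\varphi$ is non-atomic, so at any top sequent $\varphi$ is atomic. Suppose further that $\varphi \in \Delta$ but $S$ is not an axiom; then rule~2 (focus) is inapplicable, hence every formula of $\Delta$ already lies in $\Delta'$, and with (iii) this forces $\Delta = \Delta'$, so $\varphi \in \Delta'$ and $S$ has precisely the shape of the axiom schema $\Seq{\{\Delta', q\}, \ldots}{[\,\bar p\,], q}$ -- contradicting that $S$ is not an axiom. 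Therefore a top sequent with $\varphi \in \Delta$ is an axiom, and its branch is closed; a top sequent with $\varphi \notin \Delta$ cannot be an axiom (the axiom requires $\varphi = q \in \Delta' \subseteq \Delta$), so its branch is open, with $\varphi$ atomic as already shown.

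The real work is in (iv), and this is the step I expect to need the most care. The only rules that change the list $[\,\bar p\,]$ together with the $\Upsilon$-blocks are \imply-left (in its left premise) and restart (in its premise). In each case the effect is: absorb a (possibly empty) prefix $\Upsilon_1^{p_1}, \ldots, \Upsilon_i^{p_i}$ of labelled blocks into the unlabelled part $\Delta$, keep the remaining labelled blocks in order, and append one new labelled block, namely the copy $\Delta^q$ of the current $\Delta$ tagged by the freshly appended atom $q$. The surviving blocks still form a $\subseteq$-chain by the induction hypothesis, so it remains to see that the new block dominates the previous last block, i.e.\ $\Upsilon_n^{p_n} \subseteq \Delta^q$. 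Here the strengthened hypothesis does the job: $\Upsilon_n^{p_n}$ is the value $\Delta$ had earlier, $\Delta$ has only grown since, and $\Delta^q$ is a copy of the present $\Delta$; hence $\Upsilon_n^{p_n} \subseteq \Delta^q$. One then checks that the strengthened hypothesis itself is preserved: $\Delta$ grows only under \imply-right and the right premise of \imply-left (each adding one formula), is left fixed by focus and by the left premise of \imply-left, and under restart is replaced by $\Delta \cup \Upsilon_1 \cup \cdots \cup \Upsilon_i$, which equals $\Delta$ because each absorbed $\Upsilon_j$ is an earlier snapshot of $\Delta$; and the new block $\Delta^q$ is, by definition, the snapshot of $\Delta$ at the step where $q$ enters the list. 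Everything outside this monotonicity/snapshot bookkeeping is direct rule inspection.
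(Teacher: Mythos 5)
The paper states this Observation without any proof (it is offered as a direct reading of the strategy), so there is no official argument to match; your rule-by-rule induction is the natural way to discharge it, and your handling of items (i)--(iii) is essentially what the authors leave implicit and is fine, modulo the reading that an open branch stops exactly when no rule of the strategy applies.

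The genuine gap is in item (iv), and it stems from a misreading of the restart rule. You assume that, going up a branch, the unlabelled part $\Delta$ never loses a formula, and that under restart it becomes $\Delta \cup \Upsilon_1 \cup \cdots \cup \Upsilon_i$. But in the restart premise the old $\Delta$ does \emph{not} remain unlabelled: it is moved wholesale into the labelled area as the new block $\Delta^q$, and the unlabelled part of the premise consists only of the released blocks $\Upsilon_1, \ldots, \Upsilon_i$ (see Figure~\ref{fig:lmtrules}, and the restart-$p_1$ step of Figure~\ref{fig:generalprooftree}, whose premise is $\Seq{\{\},\Upsilon_2^{p_2},\ldots,\Upsilon_{i-1}^{p_{i-1}},\Delta^{p_i},\Upsilon_1}{[p_2,\ldots,p_{i-1},p_i],p_1}$). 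Hence $\Delta$ shrinks at every restart, your snapshot invariant is not preserved, and the key step of (iv) --- that the newly appended block $\Delta^q$ contains the previously last block $\Upsilon_n^{p_n}$ --- is unsupported for any $\imply$-left performed above a restart. This is not mere bookkeeping: in the paper's own Peirce example, the sequent labelled $\Rightarrow_{13}$ carries the blocks $(\to_1)^B,(A)^B$ and $(\to_1)^A$ with list $[B,A]$, and $\{\to_1, A\} \not\subseteq \{\to_1\}$, so the literal chain condition of (iv) already fails there. Any correct treatment must either restrict (iv) to the blocks created since the last restart, or reinterpret the rule/ordering of the $\Upsilon$'s; your induction as set up cannot close this case.
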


We call this strategy $\mathcal{S}$-\lmt. Figure~\ref{fig:generalprooftree}
shows a generic schema of a completely expanded branch in a tree, not
necessarily a proof, generated by the application of $\mathcal{S}$-\lmt.


\hspace{-5em}

\begin{figure}[h]
  \advance\leftskip-2cm
{
  \AxiomC{\Seq{\{\Delta'\}, \Upsilon_1^{p_1}, \Upsilon_2^{p_2}, \Upsilon_{i-1}^{p_{i-1}},\ldots,\Upsilon_i^{p_i},\Delta}{[p_1, p_2, \ldots, p_{i-1}, p_i], p_k \quad (\emph{where k = 1, 2, \ldots, i})}}
  \def\defaultHypSeparation{\hskip -.2in}
  \noLine
  \UnaryInfC{\vdots}
  \noLine
  \UnaryInfC{\emph{a sequence of focus, \imply-left and \imply-right}}
  \UnaryInfC{\Seq{\{\}, \Upsilon_1^{p_1}, \Upsilon_2^{p_2}, \ldots, \Upsilon_{i-1}^{p_{i-1}},\Delta}{[p_1, p_2, \ldots, p_{i-1}], p_i}}
  \noLine
  \UnaryInfC{\vdots}
  \noLine
  \UnaryInfC{\emph{a sequence of focus, \imply-left, \imply-right and restart (for each atomic formula in the []-area)}}
  \UnaryInfC{\Seq{\{\varphi\imply\psi\},\Upsilon_2^{p_2},\ldots,\Upsilon_{i-1}^{p_{i-1}},\Delta^{p_i}, \Upsilon_1^{p_1}, \Upsilon_1, \varphi_1,\ldots,\varphi_n}{[p_2, \ldots, p_{i-1}, p_i, p_1], p_2}}
  \UnaryInfC{\vdots}
  \RightLabel{$\imply$-right}
  \UnaryInfC{\Seq{\{\varphi\imply\psi\},\Upsilon_2^{p_2},\ldots,\Upsilon_{i-1}^{p_{i-1}},\Delta^{p_i}, \Upsilon_1^{p_1}, \Upsilon_1}{[p_2, \ldots, p_{i-1}, p_i, p_1], \varphi}} 
                                                                              \AxiomC{\vdots}
  \RightLabel{$\imply$-left}
  \BinaryInfC{\Seq{\{\varphi\imply\psi\}, \Upsilon_2^{p_2},\ldots,\Upsilon_{i-1}^{p_{i-1}},\Delta^{p_i}, \Upsilon_1}{[p_2, \ldots, p_{i-1}, p_i], p_1}}
  \RightLabel{focus}
  \UnaryInfC{\Seq{\{\},\Upsilon_2^{p_2},\ldots,\Upsilon_{i-1}^{p_{i-1}},\Delta^{p_i}, \Upsilon_1}{[p_2, \ldots, p_{i-1}, p_i], p_1}}
  \RightLabel{restart-$p_1$}
  \UnaryInfC{\Seq{\{\Delta'\}, \Upsilon_1^{p_1}, \Upsilon_2^{p_2},\ldots,\Upsilon_{i-1}^{p_{i-1}},\Delta}{[p_1, p_2, \ldots, p_{i-1}], p_i}}
  \noLine
  \UnaryInfC{\vdots}
  \noLine
  \UnaryInfC{\emph{a sequence of focus, \imply-left and \imply-right}}
  \UnaryInfC{\Seq{\{\varphi\imply\psi\}, \Upsilon_1^{p_1}, \Upsilon_1, \varphi_1,\ldots,\varphi_n}{[p_1], p_2}}
  \noLine
  \UnaryInfC{\vdots}
  \noLine
  \UnaryInfC{\emph{a sequence \imply-right}}
  \UnaryInfC{\Seq{\{\varphi\imply\psi\}, \Upsilon_1^{p_1}, \Upsilon_1}{[p_1], \varphi}} 
                                                                              \AxiomC{\vdots}
                                                                     \UnaryInfC{\Seq{\{\varphi\imply\psi, \psi\}, \psi, \Upsilon_1}{[], p_1}}
                                                                     \RightLabel{focus}
                                                                     \UnaryInfC{\Seq{\{\varphi\imply\psi \}, \psi, \Upsilon_1}{[], p_1}}
  \RightLabel{$\imply$-left}
  \BinaryInfC{\Seq{\{\varphi \imply \psi\}, \Upsilon_1}{[], p_1}}
  \RightLabel{focus}
  \UnaryInfC{\Seq{\{\}, \varphi\imply\psi, \gamma_1, \ldots, \gamma_m}{[], p_1}}
  \noLine
  \UnaryInfC{\vdots}
  \noLine
  \UnaryInfC{\Seq{\{\}, \varphi\imply\psi}{[], \gamma }}
  \RightLabel{$\imply$-right}
  \UnaryInfC{\Seq{\{\}}{[], (\varphi\imply\psi)\imply\gamma }}
\DisplayProof 
}
\caption{Generic schema of a tree generated following $\mathcal{S}$-\lmt}
\label{fig:generalprooftree}
\end{figure}

\vspace{-0.5cm}

\subsection{An Upper Bound for the Proof Search in \lmt}

Using the same approach applied in Section~\ref{sec:proofsize}, we now propose a
translation from \ljarrow~proofs into the system~\lmt. The translation function
needs to adapt a sequent in \ljarrow~form to a sequent in \lmt~form.
Figure~\ref{fig:lj2lmt} presents the definition of the translation
function\footnote{As in Section~\ref{sec:proofsize}, we use semicolon to separate
  function arguments here}.

\begin{figure}[h]
\begin{dedsystem}
\hline
{\small
\textbf{Axioms:}
\begin{equation*}
    F'\left(\Seq{\Gamma, \alpha}{\alpha}; \Delta; \Upsilon; \Sigma; \Pi\right) = 
    \alwaysNoLine
    \AxiomC{$\Seq{\{\Delta\}, \Upsilon, \Gamma, \alpha}{[\Sigma], \alpha}$}
    \def\extraVskip{2pt}
    \UnaryInfC{$\prod$}
    \DisplayProof
\end{equation*}

\bigskip

\textbf{Last rule is \imply-right:}
\begin{equation*}
F'\left(
  \alwaysNoLine
  \AxiomC{$\mathcal{D}$}
  \UnaryInfC{$\Seq{\Gamma, \alpha}{\beta}$}
  \alwaysSingleLine
  \RightLabel{$\imply$-r}
  \UnaryInfC{$\Seq{\Gamma}{\alpha \imply \beta}$}
  \DisplayProof;
  \Delta;  
  \Upsilon;
  \Sigma;
  \Pi
\right) = 
\end{equation*}

\begin{minipage}{1\textwidth}\hspace{3.5cm}
\prftree[noline]
{
\prftree[r]{$\imply\mathrm{r}$}
             {F'\left(
                  \alwaysNoLine
                  \AxiomC{$\mathcal{D}$}
                  \UnaryInfC{$\Seq{\Gamma, \alpha}{\beta}$}
                  \DisplayProof;
                  \Delta;
                  \Upsilon;
                  \Sigma;
                  \alwaysNoLine
                  \AxiomC{$\Seq{\{\Delta\}, \Upsilon, \Gamma}{[\Sigma], \alpha\imply\beta}$}
                  \UnaryInfC{$\prod$}
                  \DisplayProof
                \right)}
             {\Seq{\{\Delta\}, \Upsilon, \Gamma}{[\Sigma], \alpha\imply\beta}}
}
{\mathlarger{\mathlarger{\Pi}}}
\end{minipage}

\bigskip

\textbf{Last rule is \imply-left:}
\begin{equation*}
F'\left(
  \alwaysNoLine
  \AxiomC{$\mathcal{D}_1$}
  \UnaryInfC{$\Seq{\Gamma, \alpha \imply \beta}{\alpha}$}
            \AxiomC{$\mathcal{D}_2$}
            \UnaryInfC{$\Seq{\Gamma, \beta}{q}$}
  \alwaysSingleLine
  \RightLabel{$\imply$-l}            
  \BinaryInfC{$\Seq{\Gamma, \alpha \imply \beta}{q}$}
  \DisplayProof;
  \Delta;
  \Upsilon;
  \Sigma;
  \Pi
 \right) =
\end{equation*}

\begin{minipage}{1\textwidth}
  \prfinterspace=8em             
  \prftree[r]{$\imply\mathrm{l}$}
  {\mathcal{D}_1'}
  {\mathcal{D}_2'}  
  {PROOFUNTIL\left(
      FOCUS\left(
          \alwaysNoLine
          \AxiomC{$\Seq{\{\Delta\}, \Upsilon, \Gamma, \alpha \imply \beta}{[\Sigma], q}$}
          \UnaryInfC{$\mathlarger{\mathlarger{\Pi}}$}
          \DisplayProof
        \right)
    \right)}
\end{minipage}
}
\\ \\ [3ex]\hline
\end{dedsystem}
\caption{A recursive function to translate \ljarrow~into \lmt}
\label{fig:lj2lmt}
\end{figure}

We use some abbreviations to shorten the function definition of
Figure~\ref{fig:lj2lmt}. We present them below.

{\small
\begin{eqnarray*}
\hspace{-6cm}
\begin{minipage}{0.5\textwidth}
    $\mathcal{D}_1'$    
\end{minipage} & \hspace{-6.8cm}= &
\begin{minipage}{0.5\textwidth}\vspace{-.1cm}
                         $F'\left(
                         \alwaysNoLine
                         \AxiomC{$\mathcal{D}_1$}
                         \UnaryInfC{$\Seq{\Gamma, \alpha \imply \beta}{\alpha}$}
                         \DisplayProof;
                         \Delta \cup \{\alpha \imply \beta\};
                         \Upsilon \cup \Gamma^q \cup \{(\alpha\imply\beta)^q\};
                         \Sigma \cup \{q\};
                         \Pi'
                         \right)$
\end{minipage}  
\end{eqnarray*}

\begin{eqnarray*}
\hspace{-6cm}
\begin{minipage}{0.5\textwidth}
    $\mathcal{D}_2'$    
\end{minipage} & \hspace{-6.8cm}= &
\begin{minipage}{0.5\textwidth}\vspace{-.1cm}
                         $F'\left(
                         \alwaysNoLine
                         \AxiomC{$\mathcal{D}_2$}
                         \UnaryInfC{$\Seq{\Gamma, \beta}{q}$}
                         \DisplayProof;
                         \Delta \cup \{\alpha \imply \beta\};
                         \Upsilon;
                         \Sigma;
                         \Pi'
                         \right)$
\end{minipage}  
\end{eqnarray*}

\begin{eqnarray*}
\hspace{-6cm}
\begin{minipage}{0.5\textwidth}
    $\Pi'$    
\end{minipage} & \hspace{-6.8cm}= &
\begin{minipage}{0.5\textwidth}\vspace{-.1cm}
$PROOFUNTIL\left(
      FOCUS\left(
          \alwaysNoLine
          \AxiomC{$\Seq{\{\Delta\}, \Upsilon, \Gamma, \alpha \imply \beta}{[\Sigma], q}$}
          \UnaryInfC{$\mathlarger{\mathlarger{\Pi}}$}
          \DisplayProof
        \right)
    \right)$
\end{minipage}  
\end{eqnarray*}

\begin{eqnarray*}
\hspace{-3cm}
\begin{minipage}{0.3\textwidth}
    $\Gamma^{q}$   
\end{minipage} & \hspace{-4cm}= &
\begin{minipage}{0.7\textwidth}\vspace{-.1cm}
means that all formulas of the set $\Gamma$ are labeled with a reference to the atomic formula $q$
\end{minipage}  
\end{eqnarray*}

\begin{eqnarray*}
\hspace{-2cm}
\begin{minipage}{0.3\textwidth}
    $(\alpha\imply\beta)^q$    
\end{minipage} & \hspace{-3cm}= &
\begin{minipage}{0.7\textwidth}\vspace{-.1cm}
means the same for the individual formula $\alpha \imply \beta$.
\end{minipage}  
\end{eqnarray*}
}

The complicated case occurs when the function $F'$ is applied to a proof fragment in
which the last \ljarrow~rule applied is an \imply-left. In this case, $F'$ needs
to inspect the proof fragment constructed until that point to identify whether
the context $(\alpha\imply\beta, q)$ was already used or not. This inspection
has to be done since \lmt~does not allow two or more applications of the same
context between two applications of the $restart$ rule. To deal with this, we
use some auxiliary functions described below.

$FOCUS$ is a function that receives a fragment of proof in \lmt~form and builds
one application of the focus rule on the top of the proof fragment received in
the case that the main formula of the rule is not already focused. The main
formula is also an argument of the function. In the function definition
\eqref{eq:focus-function}, we have the constraint that $\alpha \in \Gamma$.

\newpage

\hspace{-5cm}
\begin{minipage}{1\textwidth}
  \begin{equation*}   
  FOCUS\left(
    \alwaysNoLine
    \AxiomC{$\Seq{\{\Delta\}, \Upsilon, \Gamma}{[\Sigma], \beta}$}
    \def\extraVskip{2pt}
    \UnaryInfC{$\prod$}
    \UnaryInfC{$\Seq{\{\}}{[], \gamma}$}
    \DisplayProof;
    \alpha
  \right) = 
  \end{equation*}
\end{minipage}

\bigskip\bigskip\bigskip

\begin{minipage}{1\textwidth}  
  \begin{equation}
    \label{eq:focus-function}
    \left\{
      \begin{array}{lr}
        \alwaysSingleLine
        \AxiomC{$\Seq{\{\Delta, \alpha\}, \Upsilon, \Gamma}{[\Sigma], \beta}$}
        \RightLabel{$\mathrm{focus}$}
        \UnaryInfC{$\Seq{\{\Delta\}, \Upsilon, \Gamma}{[\Sigma], \beta}$}
        \alwaysNoLine        
        \def\extraVskip{2pt}
        \UnaryInfC{$\prod$}
        \UnaryInfC{$\Seq{\{\}}{[], \gamma}$}
        \DisplayProof & \text{if }\alpha \notin \Delta \\
        \\
        \alwaysNoLine
        \AxiomC{$\Seq{\{\Delta\}, \Upsilon, \Gamma}{[\Sigma], \beta}$}
        \def\extraVskip{2pt}
        \UnaryInfC{$\prod$}
        \UnaryInfC{$\Seq{\{\}}{[], \gamma}$}
        \DisplayProof & \text{otherwise} \\        
        \end{array}
    \right\}
  \end{equation}
\end{minipage}

\bigskip\bigskip

The function $PROOFUNTIL$ also receives a fragment of a proof in \lmt~form where
$(\alpha \imply \beta, q)$ is one of the available contexts, applies the restart
rule with an atomic formula $p$ such that $p \in \Sigma$ in the top of this
fragment of proof and, then, conducts a sequence of \lmt~rule applications
following the $\mathcal{S}$-\lmt  until the point that the context $(\alpha
\imply \beta, q)$ is available again. This mechanism has to be done in the case
that the context $(\alpha \imply \beta, q)$ is already applied in an
$\imply$-left application, some point after the last restart rule application in
the proof fragment received as the argument $\Pi$. Otherwise, the proof fragment
is returned unaltered. Function $PROOFUNTIL$ is described in the function
definition~\eqref{eq:proofuntil-function}.

\bigskip

\hspace{-3.5cm}
\begin{minipage}{1\textwidth}
  \begin{equation*}
  PROOFUNTIL\left(
    \alwaysNoLine
    \AxiomC{$\Seq{\{\Delta, \alpha\imply\beta\}, \Upsilon, \Gamma}{[\Sigma], q}$}
    \def\extraVskip{2pt}
    \UnaryInfC{$\prod$}
    \UnaryInfC{$\Seq{\{\}}{[], \gamma}$}
    \DisplayProof
  \right) = 
  \end{equation*}
\end{minipage}

\bigskip\bigskip\bigskip

\begin{minipage}{1\textwidth}
  \begin{equation}
    \label{eq:proofuntil-function}    
    \left\{
      \begin{array}{lr}
        \alwaysSingleLine
        \AxiomC{$\Seq{\{\Delta'', \alpha\imply\beta\}, \Upsilon'', \Gamma'}{[\Sigma''], q}$}
        \UnaryInfC{\vdots}
        \UnaryInfC{$\Seq{\{\}, \Upsilon', \Gamma^q, \Gamma}{[\Sigma'], p}$}        
        \RightLabel{$\mathrm{restart-}p$}        
        \UnaryInfC{$\Seq{\{\Delta, \alpha\imply\beta\}, \Upsilon, \Gamma}{[\Sigma], q}$}
        \alwaysNoLine          
        \def\extraVskip{2pt}
        \UnaryInfC{$\prod$}
        \UnaryInfC{$\Seq{\{\}}{[], \gamma}$}
        \DisplayProof & \imply\text{-left}(\alpha\imply\beta, q) \in \prod\\        
        \\
        \bigskip
        \\
        \alwaysNoLine
        \AxiomC{$\Seq{\{\Delta, \alpha\imply\beta\}, \Upsilon, \Gamma}{[\Sigma], q}$}
        \def\extraVskip{2pt}
        \UnaryInfC{$\prod$}
        \UnaryInfC{$\Seq{\{\}}{[], \gamma}$}
        \DisplayProof & \text{otherwise} \\        
     \end{array}
    \right\}      
  \end{equation}
\end{minipage}


\bigskip\bigskip\bigskip

\bigskip\

As an example of this translation, we use here the formula ${(((( A \to B ) \to
  A ) \to A ) \to B ) \to B}$. We know from \cite{Dowek2006} that this formula
needs two repetitions of the hypothesis ${((( A \to B ) \to A ) \to A ) \to B}$
to be proved in \mil. To shorten the proof tree we use the following
abbreviation: ${(( A \to B ) \to A ) \to A = \epsilon}$. Thus, its normal proof
in Natural Deduction can be represented as shown in
Proof~\eqref{proof:nd-twice}.

\bigskip

{\small
\begin{equation}
\label{proof:nd-twice}    
\infer[\to \mbox{I}^3]
{(\epsilon \to  B ) \to  B}
{\infer[\to \mbox{E}]
{ B }
{\infer[\to \mbox{E}^2]
{\epsilon}
{\infer[\to \mbox{E}]
{ A }
{\infer[\to \mbox{I}^1]
{( A  \to  B )}
{\infer[\to \mbox{E}]
{ B }
{\infer[\to \mbox{I}^4]
{\epsilon}
{{[ A ]^1}
}
&
{[\epsilon \to  B]^3}
}
}
&
{[( A  \to  B ) \to  A]^2}
}
}
&
{[\epsilon \to  B]^3}
}
}
\end{equation}
}

Using the translation presented in Figure~\ref{fig:nd2lj}, we achieve the
following \ljarrow~cut-free proof. To shorten the proof, we numbered each
subformula of the initial formula that we want to prove and use these numbers to
refer to those subformulas all over the proof. Let us call this \ljarrow~version
of the proof $\mathcal{D}$ (Proof~\eqref{proof:lj-twice}).


{\small
\begin{equation}
\label{proof:lj-twice}    
  \alwaysSingleLine
  \def\extraVskip{2pt}
  \AxiomC{$\Seq{\to_3, \to_1, A, \to_1}{A}$}
  \RightLabel{$\imply$-R}  
  \UnaryInfC{$\Seq{\to_3, \to_1, A}{\to_2}$}
                                               \AxiomC{$\Seq{\to_3, \to_1, A, B}{B}$}
  \RightLabel{$\imply$-L}   
  \BinaryInfC{$\Seq{\to_3, \to_1, A}{B}$}
  \RightLabel{$\imply$-R}
  \UnaryInfC{$\Seq{\to_3, \to_1}{\to_0}$}  
                                               \AxiomC{$\Seq{\to_3, \to_1, A}{A}$}
  \RightLabel{$\imply$-L}  
  \BinaryInfC{$\Seq{\to_3, \to_1}{A}$}
  \RightLabel{$\imply$-R}
  \UnaryInfC{$\Seq{\to_3}{\to_2}$}
                                               \AxiomC{$\Seq{\to_3, B}{B}$}
  \RightLabel{$\imply$-L}  
  \BinaryInfC{$\Seq{\to_3}{B}$}
  \RightLabel{$\imply$-R}
  \UnaryInfC{$\Seq{}{(((( A  \to_0  B ) \to_1  A ) \to_2  A ) \to_3  B ) \to_4  B}$}
  \DisplayProof
\end{equation}
}

\bigskip

The translation to \lmt~starts by applying the function $F'$ to the full proof $\mathcal{D}$.

{\small
\begin{equation*}
  F'\left(
    \begin{array}{llcr}
      \multirow{5}{*}{$\mathcal{D}$;} & \Gamma   &= & \emptyset;\\
                                      & \Delta   &= & \emptyset;\\
                                      & \Upsilon &= & \emptyset;\\
                                      & \Sigma   &= & \emptyset;\\
                                      & \Pi      &= & nil
    \end{array}
    \right)
\end{equation*}
}

\bigskip

This first call of $F'$ produces the end sequent of the proof in \lmt~and calls
the function $F'$ recursively to the rest of the original proof in \ljarrow.
This \ljarrow~fragment has now, as its last rule application, an \imply-left.

{\small

\begin{displaymath}
  \begin{array}{lcl}
    \prod & = & \Seq{\{\}}{[], (((( A  \to_0  B ) \to_1  A ) \to_2  A ) \to_3  B ) \to_4  B}
  \end{array}    
\end{displaymath}

\begin{displaymath}
  \prftree[r]{$\imply\mathrm{r}$}
                {F'\left(
                    \begin{array}{llcc}
                      \multirow{5}{*}{
                      \alwaysNoLine
                      \def\extraVskip{2pt}
                      \AxiomC{$\mathcal{D}_1$}
                      \UnaryInfC{$\Seq{\to_3}{\to_2}$}
                                                 \AxiomC{$\Seq{\to_3, B}{B}$}
                      \alwaysSingleLine                                                 
                      \RightLabel{$\imply$-l}
                      \BinaryInfC{$\Seq{\to_3}{B}$}
                      \DisplayProof; } & \Gamma   &= &\emptyset;\\
                                       & \Delta   &= &\emptyset;\\
                                       & \Upsilon &= &\emptyset;\\
                                       & \Sigma   &= &\emptyset;\\
                                       & \Pi & & 
                    \end{array}
                   \right)
                }                
                {\prod}
\end{displaymath}          
}

Then, as the main formula $((( A \to_0 B ) \to_1 A ) \to_2 A ) \to_3 B$ (in the
proof represented only by $\to_3$) is not focused yet, the call of function $F'$
first constructs an application of the focus rule on the top of the $\Pi$
fragment received as a function argument. Also, the context $(\to_3, B)$ was not
expanded yet. Thus, the recursive step can proceed directly without the need of
a restart (this is controlled by the $PROOFUNTIL$ function as shown in $F'$
definition in Figure~\ref{fig:lj2lmt}).

{\small
\begin{displaymath}
  \begin{array}{lcl}
    \prod & = &
              \prftree[r]{$\imply\mathrm{r}$}
              {
                  \prftree[r]{$\imply\mathrm{focus}$}
                  {\Seq{\{\to_3\},\to_3}{[], B}}
                  {\Seq{\{\},\to_3}{[], B}}
              }
              {\Seq{\{\}}{[], (((( A  \to_0  B ) \to_1  A ) \to_2  A ) \to_3  B ) \to_4 B}}              
  \end{array}    
\end{displaymath}

\bigskip\bigskip

\begin{displaymath}
  \prftree[r]{$\imply\mathrm{l}$}
  {F'\left(
      \begin{array}{llcc}
        \multirow{5}{*}{
        \alwaysNoLine
        \def\extraVskip{2pt}
        \AxiomC{$\mathcal{D}_2$}
        \UnaryInfC{$\Seq{\to_3}{\to_2}$}
        \DisplayProof; } & \Gamma   &= &\{\to_3\};\\
                         & \Delta   &= &\{\to_3\};\\
                         & \Upsilon &= &\{\to_3^B\};\\
                         & \Sigma   &= &\{B\};\\
                         & \Pi & & 
      \end{array}
    \right)
  }
  {F'\left(
      \begin{array}{llcc}
        \multirow{5}{*}{
        \alwaysNoLine
        \def\extraVskip{2pt}
        \AxiomC{$\Seq{\to_3, B}{B}$}
        \DisplayProof; } & \Gamma   &= &\{\to_3\};\\
                         & \Delta   &= &\{\to_3\};\\
                         & \Upsilon &= &\emptyset;\\
                         & \Sigma   &= &\emptyset;\\
                         & \Pi & &
      \end{array}
    \right)
  }
  {\prod}
\end{displaymath}
}

The call of $F'$ on the right premise constructs an axiom. Thus this branch in
the \lmt~proof translation being built is closed. The next recursive call
"pastes" on the top of the right branch of the new version of $\Pi$ as follows.

\bigskip

{\small

\begin{displaymath}
  \begin{array}{lcl}
    \prod & = &
              \prftree[r]{$\imply\mathrm{r}$}
              {
                  \prftree[r]{$\imply\mathrm{focus}$}
                  {
                      \prftree[r]{$\imply\mathrm{l}$}
                      {\Seq{\{\to_3\}, \to_3^B, \to_3}{[B], \to_2}}                 
                      {\Seq{\{\to_3\}, \to_3, B}{[], B}}
                      {\Seq{\{\to_3\},\to_3}{[], B}}
                  }
                  {\Seq{\{\},\to_3}{[], B}}
              }
              {\Seq{\{\}}{[], (((( A  \to_0  B ) \to_1  A ) \to_2  A ) \to_3  B ) \to_4 B}}
  \end{array}    
\end{displaymath}

\bigskip
  
\begin{displaymath}
  \prftree[r]{$\imply\mathrm{r}$}
  {F'\left(
      \begin{array}{llcc}
        \multirow{5}{*}{
        \alwaysNoLine
        \def\extraVskip{2pt}
        \AxiomC{$\mathcal{D}_3$}
        \UnaryInfC{$\Seq{\to_3, \to_1}{\to_0}$}
        \AxiomC{$\Seq{\to_3, A}{A}$}
        \alwaysSingleLine                                                 
        \RightLabel{$\imply$-L}
        \BinaryInfC{$\Seq{\to_3, \to_1}{A}$}
        \DisplayProof; } & \Gamma   &= &\{\to_3\};\\
                         & \Delta   &= &\{\to_3\};\\
                         & \Upsilon &= &\{\to_3^B\};\\
                         & \Sigma   &= &\{B\};\\
                         & \Pi & & 
      \end{array}
    \right)                 
  }
  {\prod}                 
\end{displaymath}
}

This process goes until a point where the context ${(\to_3, B)}$ is again found,
and the translation needs to deal with a restart in the \lmt~translated proof.
This situation happens when the recursion of $F'$ reaches the point below.

{\small
\begin{displaymath}
  F'\left(
      \begin{array}{llcc}
        \multirow{5}{*}{
        \alwaysNoLine
        \def\extraVskip{2pt}
        \AxiomC{$\mathcal{D}_4$}
        \UnaryInfC{$\Seq{\to_3, \to_1, A}{\to_2}$}
        \AxiomC{$\Seq{\to_3, \to_1, A, B}{B}$}
        \alwaysSingleLine                                                 
        \RightLabel{$\imply$-L}
        \BinaryInfC{$\Seq{\to_3, \to_1, A}{B}$}
        \DisplayProof; } & \Gamma   &= &\{\to_1, A\};\\
                         & \Delta   &= &\{\to_3, \to_1\};\\
                         & \Upsilon &= &\{\to_3^B, \to_3^A, \to_1^A\};\\
                         & \Sigma   &= &\{B, A\};\\
                         & \Pi & & 
      \end{array}
    \right)                            
\end{displaymath}
}

\bigskip
The $\Pi$ fragment constructed in this step of the recursion is presented below. 

{\small
  \begin{displaymath}
  \infer[\imply\mathrm{r}]
  {\Seq{\{\}}{[], (((( A  \to_0  B ) \to_1  A ) \to_2  A ) \to_3  B ) \to_4 B}}
  {
    \infer[\imply\mathrm{focus}]
    {\Seq{\{\},\to_3}{[], B}}
    {
      \infer[\imply\mathrm{l}]
      {\Seq{\{\to_3\},\to_3}{[], B}}              
      {
        \infer[\imply\mathrm{r}]
        {\Seq{\{\to_3\}, \to_3^B, \to_3}{[B], \to_2}}                            
        {
          \infer[\imply\mathrm{focus}]
          {\Seq{\{\to_3\}, \to_3^B, \to_3, \to_1}{[B], A}}
          {
            \infer[\imply\mathrm{l}]
            {\Seq{\{\to_3, \to_1\}, \to_3^B, \to_3, \to_1}{[B], A}}              
            {
              \deduce
              {\Seq{\{\to_3, \to_1\}, \to_3^B, \to_3^A, \to_1^A, \to_3, \to_1}{[B, A], \to_0}}
              {\prod'}              
              &
              {\Seq{\{\to_3, \to_1\}, \to_3^B, \to_3, \to_1, A}{[B], A}}
            }
          }
        }
        &                          
        {\Seq{\{\to_3\}, \to_3, B}{[], B}}}
    }                 
  }              
\end{displaymath}
}

The $\Pi'$ fragment in $\Pi$ is built as the result of a proof search from the
top sequent of the leftmost branch of $\Pi$ until the point that there is a
repetition of the context ${(\to_3, B)}$ and the $\imply$-left rule can be
applied again without offending the \lmt~strategy. This result is produced by
the $PROOFUNTIL$ call when applying $F'$ to an \ljarrow~fragment that end with a
\imply-left rule application.

\bigskip

{\small
\begin{displaymath}
  \alwaysSingleLine
  \def\extraVskip{2pt}
  \AxiomC{$\prod_1'$}                                  
                                         \AxiomC{$\prod_2'$}                                  
  \RightLabel{$\imply$-l}  
  \BinaryInfC{$\Seq{\{\to_3\}, \to_3^B, \to_3, \to_1, \to_1^B, A^B, \to_3^B}{[B], A}$}  
  \RightLabel{focus}  
  \UnaryInfC{$\Seq{\{\}, \to_3^B, \to_3, \to_1, \to_1^B, A^B, \to_3^B}{[B], A}$}
  \RightLabel{restart}
  \UnaryInfC{$\Seq{\{\to_3, \to_1\}, \to_3^B, \to_3^A, \to_1^A, \to_1, A, \to_3}{[B, A], B}$}
  \DisplayProof  
\end{displaymath}
}

\noindent where $\Pi_1'$ is:

{\scriptsize
\begin{displaymath}
  \alwaysSingleLine
  \def\extraVskip{2pt}
  \AxiomC{$\Seq{\{\to_3, \to_1\}, \to_3^B, \to_3^A, \to_1^A, \to_1^B, A^B, \to_3^B, \to_3, \to_1, A}{[B, A], B}$}
  \RightLabel{$\imply$-r}
  \UnaryInfC{$\Seq{\{\to_3, \to_1\}, \to_3^B, \to_3^A, \to_1^A, \to_1^B, A^B, \to_3^B, \to_3, \to_1}{[B, A], \to_0}$}
                                            \AxiomC{$\Seq{\{\to_3\}, \to_3^B, \to_3^A, \to_1^A, \to_1^B, A^B, \to_3^B, \to_3, \to_1, A}{[B, A], A}$}  
  \RightLabel{$\imply$-l}  
  \BinaryInfC{$\Seq{\{\to_3, \to_1\}, \to_3^B, \to_3^A, \to_1^A, \to_1^B, A^B, \to_3^B, \to_3, \to_1}{[B, A], A}$}
  \RightLabel{focus}    
  \UnaryInfC{$\Seq{\{\to_3\}, \to_3^B, \to_3^A, \to_1^A, \to_1^B, A^B, \to_3^B, \to_3, \to_1}{[B, A], A}$}   
  \RightLabel{$\imply$-r}    
  \UnaryInfC{$\Seq{\{\to_3\}, \to_3^B, \to_3^A, \to_1^A, \to_1^B, A^B, \to_3^B, \to_3, \to_1}{[B, A], \to_2}$}                                  
  \DisplayProof  
\end{displaymath}
}

\noindent and $\Pi_2'$ is:

{\scriptsize
\begin{displaymath}
  \alwaysSingleLine
  \def\extraVskip{2pt}              
  \AxiomC{$\Seq{\{\to_3, \to_1\}, \to_3^B, \to_3^A, \to_1^A, \to_1^B, A^B, \to_3^B, B^A, \to_3, \to_1, B, A}{[B, A], B}$}
  \RightLabel{$\imply$-r}
  \UnaryInfC{$\Seq{\{\to_3, \to_1\}, \to_3^B, \to_3^A, \to_1^A, \to_1^B, A^B, \to_3^B, B^A, \to_3, \to_1, B}{[B, A], \to_0}$}
  \AxiomC{$\Seq{\{\to_3, to_1\}, \to_3^B, \to_3, \to_1, \to_1^B, A^B, \to_3^B, B, A}{[B], A}$}
  \RightLabel{$\imply$-l}
  \BinaryInfC{$\Seq{\{\to_3, \to_1\}, \to_3^B, \to_3, \to_1, \to_1^B, A^B, \to_3^B, B}{[B], A}$}
  \RightLabel{focus}  
  \UnaryInfC{$\Seq{\{\to_3\}, \to_3^B, \to_3, \to_1, \to_1^B, A^B, \to_3^B, B}{[B], A}$}                                  
  \DisplayProof  
\end{displaymath}
}

The top sequent of the fragment $\Pi_1'$ is the point where we can apply the
$\imply$-left rule to the context ${(\to_3, B)}$ again. Thus the next recursion
call becomes:

\begin{displaymath}
  \prftree[r]{$\imply\mathrm{l}$}
  {\Pi_3'}
  {\Pi_4'}
  {\Pi}                 
\end{displaymath}

\noindent where

{\small
\begin{eqnarray*}
\hspace{-6cm}
\begin{minipage}{0.5\textwidth}
    $\prod_3'$    
\end{minipage} & \hspace{-6.8cm}= &
\begin{minipage}{0.5\textwidth}\vspace{-.1cm}
  $F'\left(
    \begin{array}{llcc}
      \multirow{5}{*}{
      \alwaysSingleLine
      \def\extraVskip{2pt}
      \AxiomC{$\Seq{\to_3, \to_1, A}{A}$}
      \UnaryInfC{$\Seq{\to_3, \to_1, A}{\to_2}$}
      \DisplayProof; } & \Gamma   &= &\{\to_3, \to_1, A\};\\
                       & \Delta   &= &\{\to_3, \to_1\};\\
                       & \Upsilon &= &\{\to_3^A, \to_1^A, \to_3^B, \to_1^B, A^B\};\\
                       & \Sigma   &= &\{B, A\};\\
                       & \Pi & & 
    \end{array}
  \right)    $
\end{minipage}  
\end{eqnarray*}
}

\noindent and

{\small
\begin{eqnarray*}
\hspace{-6cm}
\begin{minipage}{0.5\textwidth}
    $\prod_4'$    
\end{minipage} & \hspace{-6.8cm}= &
\begin{minipage}{0.5\textwidth}\vspace{-.1cm}
  $F'\left(
    \begin{array}{llcc}
      \multirow{5}{*}{
      \alwaysNoLine
      \def\extraVskip{2pt}
      \AxiomC{$\Seq{\to_3, \to_1, A, B}{B}$}
      \DisplayProof; } & \Gamma   &= &\{\to_3, \to_1, A\};\\
                       & \Delta   &= &\{\to_3, \to_1\};\\
                       & \Upsilon &= &\{\to_3^A, \to_1^A, \to_3^B, \to_1^B, A^B\};\\
                       & \Sigma   &= &\{B, A\};\\
                       & \Pi & & 
    \end{array}
  \right)  
  $
\end{minipage}  
\end{eqnarray*}
}

Finally, after finish the translation process we obtained the translated proof
of~\eqref{proof:lmt-translated}.

\begin{landscape}
  \thispagestyle{empty}
  \begin{adjustwidth}{-8em}{-6em}
    \vspace{20cm}
    {\small
      \begin{equation}
        \label{proof:lmt-translated}
        \infer[\imply\mathrm{r}]
        {\Seq{\{\}}{[], (((( A  \to_0  B ) \to_1  A ) \to_2  A ) \to_3  B ) \to_4 B}}
        {
          \infer[\imply\mathrm{focus}]
          {\Seq{\{\},\to_3}{[], B}}
          {
            \infer[\imply\mathrm{l}]
            {\Seq{\{\to_3\},\to_3}{[], B}}              
            {
              \infer[\imply\mathrm{r}]
              {\Seq{\{\to_3\}, \to_3^B, \to_3}{[B], \to_2}}                            
              {
                \infer[\imply\mathrm{focus}]
                {\Seq{\{\to_3\}, \to_3^B, \to_3, \to_1}{[B], A}}              
                {
                  \infer[\imply\mathrm{l}]
                  {\Seq{\{\to_3, \to_1\}, \to_3^B, \to_3, \to_1}{[B], A}} 
                  {
                    \deduce
                    {\Seq{\{\to_3, \to_1\}, \to_3^B, \to_3^A, \to_1^A, \to_3, \to_1}{[B, A], \to_0}}
                    {
                      \deduce
                      {\Pi'}
                      {
                        {
                          \infer[\imply\mathrm{r}]
                          {\Seq{\{\to_3, \to_1\}, \to_3^B, \to_3^A, \to_1^A, \to_1^B, A^B, \to_3^B, \to_3, \to_1, A}{[B, A], \to_2}}
                          {\Seq{\{\to_3, \to_1\}, \to_3^B, \to_3^A, \to_1^A, \to_1^B, A^B, \to_3^B, \to_3, \to_1, A}{[B, A], A}}
                        }
                        &
                        {\Seq{\{\to_3, \to_1\}, \to_3^B, \to_3^A, \to_1^A, \to_1^B, A^B, \to_3^B, \to_3, \to_1, A}{[B, A], A}}
                      }
                    }                            
                  } &
                  {\Seq{\{\to_3, \to_1\}, \to_3^B, \to_3, \to_1, A}{[B], A}}                                                       
                }
              }
              &                          
              {\Seq{\{\to_3\}, \to_3, B}{[], B}}}
          }                 
        }              
      \end{equation}
    }
    \end{adjustwidth}
\end{landscape}

\subsection{Termination}

To control the end of the proof search procedure of \lmt~our approach is to
define an upper bound limit to the size of its proof search tree. Then, we need
to show that the \lmt~strategy here proposed allows exploring all the possible
ways to expand the proof tree until it reaches this size.

From Theorem~\ref{theo:upperbound-lj}, we know that the upper bound for cut-free
proofs based on \ljarrow~is ${|\alpha| \cdot 2^{|\alpha| + 1}}$, where $\alpha$
is the initial formula that we want to prove. We use the translation presented
in~Figure~\ref{fig:lj2lmt} on the previous Section to find a similar limit
for~\lmt proofs. We have to analyze three cases to establish an upper bound for
\lmt.

The cases are described below and are summarized in Table~\ref{tab:size}.

\begin{enumerate}[(i)]
\item Axioms of \ljarrow~maps one to one with axioms of \lmt;
\item $\imply$-right applications of \ljarrow~maps one to one with
  $\imply$-right applications of \lmt;
\item $\imply$-left applications of \ljarrow~maps to \lmt~in three different
  possible sub-cases, according to the context $(\alpha \imply \beta, q)$ in
  which the rule is being applied in~\ljarrow. We have to consider the fragment
  of \lmt~already translated to decide the appropriate case.
  \bigskip
  \begin{itemize}
  \item If the context is \textbf{not yet focused neither expanded}
    \begin{blockquote}
      Then, one application of \imply-left in \ljarrow~maps to two applications
      of rules in \lmt: first, a focus application, then an \imply-left
      application.
    \end{blockquote}
    
  \item If the context is \textbf{already focused but not yet expanded}

    \begin{blockquote}
      Then, one application of \imply-left in \ljarrow~maps to one application
      of \imply-left in \lmt.
    \end{blockquote}

  \item If the context $(\alpha \imply \beta, q)$ is \textbf{already focused and
      expanded}

    \begin{blockquote}
      Then, the one application of \imply-left in \ljarrow~maps to the height of
      the \lmt~proof fragment produced by the execution of the $PROOFUNTIL$
      function. Let this height be called $h$.
    \end{blockquote}

  \end{itemize}
\end{enumerate}

\begin{table}[h]
\centering
\begin{tabular}{c|cc|c|l}
\hline
\ljarrow                     & \multicolumn{2}{c}{\lmt}         & Map &                               \\
\cline{1-4}  
axiom                        & \multicolumn{2}{c}{axiom}        & 1:1 &                               \\
\cline{1-4}
\imply-right                 & \multicolumn{2}{c}{\imply-right} & 1:1 &                               \\
\hline
                             & focused         & expanded       &     &                               \\
\multirow{3}{*}{\imply-left} & No              & No             & 1:2 & 1 focus and 1 \imply-left     \\
                             & Yes             & No             & 1:1 & 1 \imply-left                 \\
                             & Yes             & Yes            & 1:h & 1 to the size of $PROOFUNTIL$ \\
\hline                                                                        
\end{tabular}
\caption{Mapping the size of \ljarrow~proofs into \lmt}
\label{tab:size}
\end{table}

\bigskip\bigskip

\begin{lemma}
  \label{lemma:size-h}
  The height $h$ that defines the size of the proof fragment returned by the
  function $PROOFUNTIL$ has a maximum limit of $2^{2log_2|\alpha|}$, where
  $\alpha$ is the main formula of the initial sequent of the proof in \lmt.
\end{lemma}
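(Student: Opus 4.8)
The goal is to bound the height $h$ of the \lmt-fragment that $PROOFUNTIL$ emits when it has to "re-open" a context $(\alpha\imply\beta,q)$ that was already focused and expanded. I would start by unpacking exactly what $PROOFUNTIL$ does according to definition~\eqref{eq:proofuntil-function}: it places one $restart$-$p$ application on top of the received fragment (for some $p\in\Sigma$), and then runs $\mathcal{S}$-\lmt\ until the context $(\alpha\imply\beta,q)$ becomes available again. So $h$ is the height of a \emph{single} such "restart block" — from the $restart$ rule up to the sequent where $(\alpha\imply\beta,q)$ can be applied once more. The key observation is that, by the strategy $\mathcal{S}$-\lmt, between two consecutive $restart$ applications on a branch no $restart$ occurs, so the block consists solely of \imply-right, focus, and \imply-left steps, each of which strictly consumes structure (a connective, an unfocused formula, or an untried context) that cannot be replenished without a further $restart$. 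This gives a finite bound on the block height purely in terms of the number of subformulas of the initial formula $\alpha$ and the number of distinct atoms occurring in it.

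**Counting the steps inside one restart block.** I would bound the three rule-types separately. (1) \imply-right: each application strictly decreases the implicational depth of the formula on the right of the sequent (outside the []-area), so at most $|\alpha|$ in a row — in fact $\log_2|\alpha|$ is the relevant quantity once one notes that the right-hand formula is always a subformula of $\alpha$ and its depth is at most $\log_2|\alpha|$ for the formulas produced by the translation of Section~\ref{sec:proofsize}. (2) focus: a formula can be focused at most once between restarts (the strategy picks an \emph{unfocused} formula), and the formulas on the left are all subformulas of $\alpha$, so at most $|\alpha|$ focus steps, i.e.\ $\le 2^{\log_2|\alpha|}$. (3) \imply-left: a context $(\gamma,r)$ is applied at most once between two restarts (clause 3 of $\mathcal{S}$-\lmt\ plus the definition of "already tried"); the number of contexts is bounded by $(\text{number of left subformulas})\times(\text{number of atoms})\le |\alpha|\cdot\log_2|\alpha|$, and crucially \imply-left branches, so the \emph{height} contributed is the depth of nesting of \imply-left applications, again bounded by the number of available atoms $\le\log_2|\alpha|$. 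Interleaving these, the worst case height of one restart block is dominated by a product of two "$\le 2^{\log_2|\alpha|}$"-type factors — one from focus/left-formula exhaustion and one from the atom-indexed restart/context bookkeeping — which yields $h\le 2^{\log_2|\alpha|}\cdot 2^{\log_2|\alpha|}=2^{2\log_2|\alpha|}$.

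**Assembling the bound.** With the per-step counts in hand, I would argue that the height of the fragment is the sum (along the longest branch) of: one $restart$ line, then an alternation of blocks of \imply-right ($\le\log_2|\alpha|$ each), focus ($\le 1$ each, $\le|\alpha|$ total), and \imply-left ($\le\log_2|\alpha|$ nesting depth, each preceded by its focus), and finally one more $restart$ or the resurfacing of $(\alpha\imply\beta,q)$. Grouping the focus-and-left pairs, the dominating term is bounded by $|\alpha|\cdot\log_2|\alpha|$, and since $|\alpha|\le 2^{\log_2|\alpha|}$ and $\log_2|\alpha|\le 2^{\log_2|\alpha|}$, this is $\le 2^{2\log_2|\alpha|}$, as claimed. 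I would present this via Observation~\ref{obs:strategy} (monotonicity of the $\Upsilon_i$ and $\Delta'\subseteq\Delta$) to justify that no new subformulas or atoms appear, so the counting arguments are sound.

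**Main obstacle.** The delicate point is \emph{not} the arithmetic but pinning down precisely why a restart block is finite — i.e.\ that $\mathcal{S}$-\lmt\ cannot loop through focus/\imply-left/\imply-right steps without eventually making the context $(\alpha\imply\beta,q)$ available. This requires a careful "no formula is refocused, no context is retried" invariant that holds between two restarts, together with the fact that \imply-right strictly reduces the right-hand formula. Making this invariant rigorous — especially the interaction between the curly-bracket area $\Delta'$ growing via focus and the []-area driving the availability of contexts — is where the real work lies; the $2^{2\log_2|\alpha|}$ figure then falls out of combining the two exhaustion counts. I would expect to spend most of the proof establishing that invariant and only a line or two on the final inequality.
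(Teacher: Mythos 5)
Your overall plan---bounding $h$ by exhaustion counts of focus, \imply-right and \imply-left applications driven by the strategy $\mathcal{S}$-\lmt---is in the same family as the paper's argument, but two of the quantitative claims it rests on are false, and one structural assumption does not match what $PROOFUNTIL$ produces. The bound $2^{2\log_2|\alpha|}$ is literally just $|\alpha|^2$; there is no logarithmic structure to exploit. Your counts (1) and (3) assert that the implicational depth of the right-hand formula and the number of atoms occurring in $\alpha$ are each at most $\log_2|\alpha|$. Neither holds: a subformula of $\alpha$ can have depth linear in $|\alpha|$ (right-nested implications), and $\alpha$ can contain on the order of $|\alpha|$ distinct atoms; the appeal to ``formulas produced by the translation of Section 4'' does not restrict their shape. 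Your final inequality only closes because these underestimates happen to multiply out below $|\alpha|^2$; with the correct counts both factors can be of order $|\alpha|$, and your decomposition would have to be redone. In addition, the per-block bound on \imply-left nesting ``by the number of available atoms'' does not follow from $\mathcal{S}$-\lmt: what cannot be repeated between two restarts is a context $(\gamma,r)$, a formula--atom pair, so a single branch may accumulate up to (number of implicational subformulas) $\times$ (number of atoms) applications of \imply-left before a restart is forced.

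Second, you treat the $PROOFUNTIL$ output as a single restart block---one restart, then rule applications until $(\alpha\imply\beta,q)$ resurfaces---and flag as the ``main obstacle'' proving that no further restart intervenes. The paper neither makes nor needs that assumption: its proof bounds the height of the whole fragment $\Sigma$ by the total number of \imply-left applications in it, estimated as the degree of $\alpha_1$ (bounding the implicational formulas available on the left of a sequent) multiplied by the maximal number $n$ of atoms inside the []-area of the highest branch of $\Sigma$, precisely because each such atom licenses one restart inside the fragment; this gives $h \le n\cdot|\alpha_1| \le |\alpha|\cdot|\alpha| = |\alpha|^2 = 2^{2\log_2|\alpha|}$. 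So the multiplication underlying the stated bound is (left implicational formulas) $\times$ (restarts permitted by the []-area), not (focus count) $\times$ (logarithmic depth), and the possibility of several restarts inside the fragment---which your single-block reading excludes---is exactly what the factor $n$ is there to absorb.
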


\begin{proof}
  Consider a proof $\prod_\ljarrow$ of an initial sequent in \ljarrow~with the
  form $\Seq{}{\alpha}$. The process of translating $\prod_\ljarrow$ to
  \lmt~produces a proof $\prod_\lmt$ with the initial sequent in the form
  $\Seq{\{\}}{[],\alpha}$. Consider that $\alpha$ has the form $\alpha_1 \imply
  \alpha_2$. In some point of the translation to \lmt, we reach a point where a
  context $(\psi \imply \varphi, q)$ is already focused and expanded in the
  already translated part of the proof $\prod_\lmt$. At this point, the function
  $PROOFUNTIL$ generates a fragment of the proof $\prod_\lmt$, call it $\Sigma$
  of size $h$. The height $h$ is bound by the number of applications of
  \imply-left rules in $\Sigma$, which can be determined by the multiplication of
  the degree of the formula $\alpha_1$ (that bounds the number of possible
  implicational formulas in the left side of a sequent in \lmt) by the maximum
  number of atomic formulas ($n$) inside the []-area in the highest branch of
  $\Sigma$ (each $p_i$ inside the []-area allows one application of the restart
  rule). Thus we can formalize this in the following manner:

  \begin{equation*}
    \begin{array}{lcl}
      h & = & n \times |\alpha_1| \\
      h & = & |\alpha| \times |\alpha| \\
      h & = & |\alpha|^2
    \end{array}
  \end{equation*}

Since
  \begin{equation*}
    \begin{array}{lcl}
      |\alpha|^2 & = & 2^{2 \cdot log_2|\alpha|}
    \end{array}
  \end{equation*}

  Then, we have that
  \begin{equation*}
    \begin{array}{lcl}
      h & = & 2^{2 \cdot log_2|\alpha|}
    \end{array}
  \end{equation*}  
\end{proof}

\bigskip

\begin{theorem}
  \label{theo:upperbound-lmt}
  Let $\alpha$ be a \mil~tautology. The size of any proof of $\alpha$ generated
  by  $\mathcal{S}$-\lmt  has an upper bound of $|\alpha| \cdot 2^{|\alpha| + 1 + 2 \cdot log_2|\alpha|}$.
\end{theorem}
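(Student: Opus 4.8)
The plan is to combine Theorem~\ref{theo:upperbound-lj} with the case analysis summarized in Table~\ref{tab:size} and the local bound on $PROOFUNTIL$ from Lemma~\ref{lemma:size-h}. The key observation is that the translation function $F'$ of Figure~\ref{fig:lj2lmt} maps each rule application in the $\ljarrow$ proof to a bounded-height fragment of the $\lmt$ proof: axioms and $\imply$-right steps map $1{:}1$, while an $\imply$-left step maps to at most $1+h$ steps, where $h$ is the $PROOFUNTIL$ height bounded by $|\alpha|^2 = 2^{2\log_2|\alpha|}$ by Lemma~\ref{lemma:size-h}. So the multiplicative blow-up incurred per original step is at most $1 + 2^{2\log_2|\alpha|}$, which is at most $2^{2\log_2|\alpha|+1}$ for $|\alpha|\ge 1$.

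First I would recall from Theorem~\ref{theo:upperbound-lj} that a cut-free $\ljarrow$ proof of $\alpha$ has height at most $|\alpha|\cdot 2^{|\alpha|+1}$. Next I would argue that the height of the $\lmt$ proof produced by $F'$ is bounded by the $\ljarrow$ height multiplied by the worst-case per-step blow-up: walking down any branch of the translated proof, each of the (at most $|\alpha|\cdot 2^{|\alpha|+1}$) original inference steps contributes at most $2^{2\log_2|\alpha|+1}$ steps to that branch. Multiplying,
\begin{equation*}
  |\alpha|\cdot 2^{|\alpha|+1}\cdot 2^{2\log_2|\alpha|+1}
  \;\le\; |\alpha|\cdot 2^{|\alpha|+1+2\log_2|\alpha|},
\end{equation*}
which is the claimed bound (absorbing the extra constant factor into the exponent, or noting the stated bound is already the dominant term). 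Finally, since the strategy $\mathcal{S}$-$\lmt$ explores the proof-search tree in a way that the translation respects — i.e., every proof $\mathcal{S}$-$\lmt$ can produce is (up to the strategy's deterministic choices) one in the image of $F'$ applied to some $\ljarrow$ proof, which itself embeds a normal Natural Deduction proof subject to Hirokawa's bound — the height of any proof generated by $\mathcal{S}$-$\lmt$ is bounded as claimed.

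The main obstacle I anticipate is the third step: justifying that \emph{every} proof output by $\mathcal{S}$-$\lmt$ is captured by the $F'$-translation bound, rather than merely that \emph{some} proof of $\alpha$ has this size. This requires knowing that the strategy does not wander into branches longer than what the translation produces — e.g., that between two restart rules the strategy cannot apply more $\imply$-left steps than the $|\alpha_1|$-many contexts times $n$-many restart-eligible atoms accounted for in Lemma~\ref{lemma:size-h}. I would handle this by appealing to the structural invariants in Observation~\ref{obs:strategy} (in particular (iii) $\Delta'\subseteq\Delta$ and (iv) the nesting of the $\Upsilon_i$), which cap how many distinct contexts and $[\,]$-area atoms can ever appear, together with the fact that the strategy never repeats a context between restarts and never repeats a restart atom along a branch — so each segment of a branch has length controlled exactly as in the $PROOFUNTIL$ analysis. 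A secondary, more routine worry is bookkeeping the constant factors in the exponent arithmetic, but as noted these are dominated by the $2^{|\alpha|}$ term and do not affect the stated asymptotic bound.
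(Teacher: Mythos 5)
Your proposal takes essentially the same route as the paper: the paper's own proof is a one-liner that likewise combines Theorem~\ref{theo:upperbound-lj}, the case mapping of Table~\ref{tab:size}, and Lemma~\ref{lemma:size-h}, with the worst-case per-step blow-up taken to be the $PROOFUNTIL$ height $h = 2^{2\log_2|\alpha|}$. The only slip is your use of $1+h$ per $\imply$-left step, which makes your displayed inequality off by a factor of $2$ against the stated bound (the paper's figure corresponds to taking $h$ itself as the worst case, which dominates the $1{:}2$ focus-plus-left case once $|\alpha|\ge 2$); aside from that bookkeeping point, your argument --- including the extra care about why \emph{every} $\mathcal{S}$-\lmt proof is covered, which the paper glosses over --- is a more detailed version of the same proof.
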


\begin{proof}
  Considering the size of proofs for a formula $\alpha$ using \ljarrow, the
  mapping in Table~\ref{tab:size} and the Lemma~\ref{lemma:size-h}, the proof
  follows directly.
\end{proof}

\begin{theorem}
\label{theo:terminate} 
$\mathcal{S}$-\lmt~always terminates.   
\end{theorem}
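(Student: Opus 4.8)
The plan is to reduce termination of $\mathcal{S}$-\lmt~to the explicit upper bound already established in Theorem~\ref{theo:upperbound-lmt}, together with the observation that the strategy $\mathcal{S}$-\lmt~is deterministic and only ever branches finitely. First I would argue that, given any input sequent whose right-hand formula is a \mil~tautology $\alpha$, every branch of the tree that $\mathcal{S}$-\lmt~constructs is finite: each application of a rule either strictly decreases the degree of the formula on the right of the sequent (\imply-right), or moves a not-yet-focused formula of $\Delta$ into $\Delta'$ (focus), or applies an \imply-left for a context $(\alpha\imply\beta,q)$ which, by clause~3 of the strategy, cannot be repeated until a restart occurs, or applies a restart for the leftmost not-yet-chosen atom in the []-area (clause~4), and the []-area is a repetition-free sequence whose length is bounded by the number of distinct subformulas of $\alpha$. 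Hence between two consecutive restarts only finitely many focus/\imply-left/\imply-right steps can occur, and only finitely many restarts are available; so each branch has bounded height, namely at most $|\alpha| \cdot 2^{|\alpha| + 1 + 2\cdot log_2|\alpha|}$ by Theorem~\ref{theo:upperbound-lmt}. Since each rule has at most two premises, König's Lemma gives that the whole tree is finite, and the procedure halts.

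The second component I would make explicit is why the proof search, driven by $\mathcal{S}$-\lmt, actually \emph{reaches} a tree covered by that bound rather than diverging: this is exactly where the translation function $F'$ of Figure~\ref{fig:lj2lmt} does its work. Since $\alpha$ is a tautology, by soundness and completeness of \ljarrow~(stated in Section~\ref{sec:related-work}) there is a cut-free \ljarrow~proof $\prod_\ljarrow$ of $\Seq{}{\alpha}$, and $F'$ maps it to an \lmt~derivation whose rule applications at each node agree with the choices dictated by $\mathcal{S}$-\lmt~(\imply-right first, then focus, then \imply-left on an untried context, then restart). Because the strategy is deterministic, the tree the search produces is precisely this translated tree (or a closed prefix of it if a counter-model is found earlier), and Table~\ref{tab:size} plus Lemma~\ref{lemma:size-h} bound its size. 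For the case where $\alpha$ is \emph{not} a tautology, the same degree/focus/restart-exhaustion argument of the previous paragraph applies directly — each branch is still bounded in length because the []-area is finite and repetition-free, the set of left-hand subformulas is finite, and each context is tried at most once per restart-epoch — so the search still terminates, this time producing an open branch from which a counter-model is read off (Section~\ref{sec:completeness}).

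I expect the main obstacle to be the bookkeeping in the first paragraph: making rigorous the claim that ``only finitely many steps occur between two restarts.'' One has to check that the multiset $\Delta$ of left formulas, although it grows when \imply-left copies $\Delta$ into the premises and relabels it, does not grow \emph{unboundedly} within a restart-epoch — the relevant invariant is Observation~\ref{obs:strategy}, in particular that $\Delta' \subseteq \Delta$ and that the $\Upsilon_i^{p_i}$ are nested, so the formulas available for focusing and for \imply-left are always among the finitely many subformulas of the original $\alpha$ (possibly with finitely many distinct labels drawn from the atoms appearing in the []-area). Once that finiteness-of-the-relevant-formula-pool invariant is nailed down, a measure on each branch — lexicographically on (atoms remaining in the []-area, untried contexts since last restart, unfocused left formulas, degree of the right formula) — decreases at every step, and termination of every branch, hence of the whole finitely-branching tree, follows.
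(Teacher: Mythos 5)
Your core argument coincides with the paper's: termination is obtained from the upper bound of Theorem~\ref{theo:upperbound-lmt} together with the observation that the ordered application of rules in $\mathcal{S}$-\lmt~exhausts all possible expansions --- \imply-right until the right formula is atomic, focus at most once per formula per restart-epoch, \imply-left at most once per context between restarts, and restart once per atom in the []-area. Your first and third paragraphs are essentially a more detailed version of the paper's bullet list (the paper does not spell out the lexicographic measure or invoke K\"onig's Lemma, but these are harmless refinements). One nuance of emphasis: the paper does not claim that branches are \emph{intrinsically} bounded by the value in Theorem~\ref{theo:upperbound-lmt}; it \emph{imposes} that value as an explicit cutoff on the height of open branches, and the exhaustion argument then justifies that stopping there loses nothing. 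Since Theorem~\ref{theo:upperbound-lmt} is stated only for proofs of tautologies, your appeal to it as a bound on the height of \emph{every} search branch (including failed ones) is slightly misplaced; your fallback exhaustion argument for non-tautologies is what actually carries that case, exactly as in the paper.

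The genuinely questionable step is your second paragraph: the claim that, because $\mathcal{S}$-\lmt~is deterministic, the tree produced by the search ``is precisely'' the image under $F'$ of a cut-free \ljarrow~proof. The translation of Figure~\ref{fig:lj2lmt} follows the structure of the given \ljarrow~proof, whose choices of main formula need not agree with the fixed ordering of clauses 1--4 of the strategy (which formula to focus next, which context counts as the \emph{first} untried one), so the $F'$-image is in general a different \lmt~derivation from the one the search constructs. The paper never makes this identification: it uses $F'$ only to derive the size bound, and then uses that bound as the termination limit. Fortunately this claim is not needed for termination --- your exhaustion argument already covers both the tautology and the non-tautology case --- but as written it would be the weak link if challenged; the question of whether a proof is actually \emph{found} within the bound belongs to the completeness argument of Section~\ref{sec:completeness}, not to this theorem.
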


\begin{proof}
  To guarantee termination, we use the upper bound presented in
  Theorem~\ref{theo:upperbound-lmt} to limit the height of opened branches
  during the \lmt~proof search process. The strategy presented in
  Section~\ref{sec:strategy} forces an ordered application of rules that
  produces all possible combinations of formulas to be expanded in the right and
  left sides of generated sequents. In other words, when the upper bound is
  reached, the proof search had, for sure, applied all possible expansion. The
  procedure can only continue by applying an already done expansion. 
   
\begin{itemize}
\item $\imply$-right rule is applied until we obtain an atomic formula on the
  right side.
\item focus rule is applied until every non-labeled formula becomes focused. The
  same formula can not be focused twice unless a restart rule is applied.
\item $\imply$-left rule can not be applied more than once to the same context
  unless a restart rule is applied.
\item between two applications of the restart rule in a branch there is only one
  possible application of a $\imply$-left rule for a context $(\alpha, q)$.
  $\alpha$ and $q$ are always subformulas of the initial formula.
\item restart rule is applied for each atomic formula that appears on the right
  side of sequents in a branch in the order of its appearance in the []-area,
  which means that proof search will apply the restart rule for each $p_i, i =
  1, \ldots n$ until the branch reaches the defined limit.
\end{itemize}

\end{proof}

The proof of completeness of \lmt~is closely related to this strategy and with
the way the proof tree is labeled during the proving process. Section
\ref{sec:soundness} presents the soundness proof of \lmt~and Section
\ref{sec:completeness}, the completeness proof.
 
\subsection{Soundness} 
\label{sec:soundness}

In this section, we prove the soundness of \lmt. A few basic facts and
definitions used in the proof follow.

\begin{definition}
  \label{def:valid}
  A sequent \Seq{\{\Delta'\}, \Upsilon_1^{p_1}, \Upsilon_2^{p_2}, \ldots,
    \Upsilon_n^{p_n}, \Delta}{[p_1, p_2, \ldots, p_n], \varphi} is \emph{valid},
  if and only if, \todo[color=yellow!100]{ch04-lmt - Doubt: One of the members
    of Dedukti argued that as the $\Upsilon$ sets includes the previous ones we
    do not need this Union in the notation}$\Delta', \Delta \models \varphi$ or
  $\exists i \displaystyle (\bigcup_{k=1}^{i}{\Upsilon_k^{p_k}}) \models p_i$,
  for $i=1, \ldots n$.
\end{definition}

\begin{definition}
  We say that a rule is \emph{sound}, if and only if, in the case of the
  premises of the sequent are valid sequents, then its conclusion also is.
\end{definition}

A calculus is sound, if and only if, each of its rules is sound. We prove the
soundness of \lmt~by showing that this is the case for each one of its rules.
\\
\begin{proposition}\label{prop:sound}
  Considering validity of a sequent as defined in Definition~\ref{def:valid},
  \lmt~is sound.
\end{proposition}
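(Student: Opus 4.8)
The plan is to prove soundness rule-by-rule, showing that for each of the four rule schemes of \lmt\ (axiom, focus, restart, \imply-right, \imply-left) the validity of the premises (as in Definition~\ref{def:valid}) implies the validity of the conclusion. Since a calculus is sound exactly when all its rules are, this suffices. Throughout I would work with an arbitrary Kripke model for \mil\ and the disjunctive notion of validity: a sequent is valid iff $\Delta',\Delta\models\varphi$ or $(\bigcup_{k=1}^{i}\Upsilon_k^{p_k})\models p_i$ for some $i$.

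First, the \textbf{axiom} case: the conclusion is $\Seq{\{\Delta',q\},\Upsilon_1^{p_1},\ldots,\Upsilon_n^{p_n},\Delta}{[p_1,\ldots,p_n],q}$, and since $q$ appears on the left, trivially $\Delta',\Delta,q\models q$, so the first disjunct holds. Next, \textbf{focus}: the only change from premise to conclusion is that $\alpha$ moves out of the $\{\cdot\}$-area, but $\alpha$ stays in $\Delta$, and the right side and the $\Upsilon$-bags are unchanged; so the semantic content of both disjuncts is literally preserved, and validity transfers. For \textbf{\imply-right}: the premise is valid means either $\Delta',\Delta,\alpha\models\beta$ or some $\Upsilon$-disjunct holds. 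In the first case, by the semantics of $\imply$ in Kripke models (monotonicity of $\mathcal V$ along $\preceq$), $\Delta',\Delta\models\alpha\imply\beta$, which is the first disjunct for the conclusion; in the second case the $\Upsilon$-bags and the sequence $[p_1,\ldots,p_n]$ are identical in premise and conclusion, so that disjunct carries over verbatim.

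The \textbf{restart} rule needs the bookkeeping of the square-bracket area to be read correctly: in the conclusion the right formula is $q$ and the sequence is $[p_1,\ldots,p_i,\ldots,p_n]$, while in the premise $q$ has been appended to the sequence (as $p_{n+1}$, say), the right formula is now $p_i$, and $\Delta$ has been relabeled $\Delta^q$ and merged into the $\Upsilon$-structure. The key observation is that $p_i$ now plays the role of one of the $\Upsilon$-indexed atoms for the premise, and the accumulated left-hand formulas available to derive $p_i$ in the premise are a superset of $\Delta',\Delta$; so if the premise is valid via the disjunct $(\bigcup_{k=1}^{j}\Upsilon_k)\models$ (its $j$-th atom) for the index $j$ pointing at $p_i$, we recover $\Delta',\Delta\models p_i$, i.e. the $i$-th $\Upsilon$-disjunct for the conclusion — and if the premise is valid by some other disjunct, that disjunct (or a sub-union of it) is already among the conclusion's disjuncts. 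Finally, \textbf{\imply-left} is the genuinely delicate case and the main obstacle: the left premise carries $\alpha$ on the right with $q$ pushed into the $[\cdot]$-area and the $\Delta$ formulas duplicated with a $q$-label ($\Delta^q$) and merged into $\overline\Upsilon$; the right premise carries $\beta$ added to the left and $q$ on the right. I would argue: if the right premise is valid because $\Delta',\Delta,\beta\models q$, then combined with $\alpha\imply\beta$ being in the context on the left of the conclusion — and with the left premise's validity giving either $\Delta',\Delta\models\alpha$ (whence $\Delta',\Delta\models\beta$ by modus ponens, hence $\models q$, the first disjunct of the conclusion) or some $\Upsilon$-disjunct of the left premise that, after stripping the extra $q$-labeled copies $\Delta^q$, yields an $\Upsilon$-disjunct of the conclusion (or directly $\Delta',\Delta\models q$) — we get validity of the conclusion. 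If instead either premise is valid via one of the ``old'' $\Upsilon$-disjuncts (the ones indexed by $p_1,\ldots,p_n$, common to premises and conclusion), that disjunct is literally a disjunct of the conclusion. The care needed is precisely in tracking how $\Delta^q$ inside $\overline\Upsilon$ in the left premise relates to the conclusion's disjuncts, and in using $\alpha\imply\beta$'s presence in every relevant context to close the argument by modus ponens at the semantic level; I expect this is where the proof in the paper spends most of its effort.
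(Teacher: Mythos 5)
Your overall strategy is exactly the paper's: soundness is reduced to showing each rule sound, and for each rule you case-split on which disjunct of Definition~\ref{def:valid} makes the premise(s) valid. Your treatments of focus, \imply-right (persistence of forcing along $\preceq$) and \imply-left (the $\Upsilon$-disjuncts carry over verbatim; the $\Delta^q$-disjunct of the left premise collapses to the conclusion's first disjunct via $\Delta^q=\Delta$; and when both premises are valid through their first disjuncts, a semantic modus ponens with $\alpha\imply\beta$, which sits in the focus area of all three sequents, yields $\alpha\imply\beta,\Delta',\Delta\models q$) coincide with the bullets in the paper's proof, as does the use of $\Delta'\subseteq\Delta$ throughout.

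There is, however, one step in your restart case that as written does not work. You claim that ``the accumulated left-hand formulas available to derive $p_i$ in the premise are a superset of $\Delta',\Delta$'' and from this ``recover $\Delta',\Delta\models p_i$, i.e.\ the $i$-th $\Upsilon$-disjunct for the conclusion.'' Both halves are off: in the premise of restart the unlabeled left-hand formulas are $\Upsilon_1,\ldots,\Upsilon_i$ (the focus area is empty), and by Observation~\ref{obs:strategy} these are \emph{subsets} of $\Delta$, not supersets; moreover $\Delta',\Delta\models p_i$ is not the conclusion's $i$-th disjunct, which is $(\bigcup_{k=1}^{i}\Upsilon_k^{p_k})\models p_i$, and entailment does not pass from a larger premise set to a smaller one, so even if you had $\Delta',\Delta\models p_i$ it would not give you that disjunct. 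The repair is simpler than your detour, and is what the paper does: if the premise is valid because $\Upsilon_1,\ldots,\Upsilon_i\models p_i$, this is \emph{literally} the conclusion's $i$-th $\Upsilon$-disjunct, since removing the labels does not change the formulas in the bags; the remaining premise disjuncts ($\exists j>i$ with $\bigcup_{k=i+1}^{j}\Upsilon_k^{p_k}\models p_j$, handled by monotonicity of $\models$ under extra premises, and $\Delta^q\models q$, giving $\Delta',\Delta\models q$ since $\Delta^q=\Delta$ and $\Delta'\subseteq\Delta$) go through as you sketch. With that local correction your argument matches the paper's proof.
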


\begin{proof}
  We show that supposing that premises of a rule are valid then, the validity of
  the conclusion follows. In the sequel, we analyze each rule of \lmt.
 
  \textbf{\imply-left} We need to analyze both premises together. Thus we have
  the combinations described below.
  \begin{itemize}
  \item Supposing the left premise is valid because $\alpha\imply\beta, \Delta',
    \Delta \models \alpha$ and the right premise is valid because
    $\alpha\imply\beta, \Delta', \Delta, \beta \models q$. We also know that
    $\alpha \imply \beta \in \Delta$ and $\Delta' \subseteq \Delta$. In this
    case, the conclusion holds:
      
        \begin{displaymath}      
          \AxiomC{$\alpha\imply\beta\;\Delta'\;\Delta$}
          \noLine
          \UnaryInfC{$\prod$}
          \noLine
          \UnaryInfC{$\alpha$}
          \AxiomC{$\alpha\imply\beta$}
          \BinaryInfC{$\beta$}
          \UnaryInfC{$q$}
          \DisplayProof
        \end{displaymath}

    \item Supposing the left premise is valid because $\exists i \displaystyle (\bigcup_{k=1}^{i}{\Upsilon_k^{p_k}}) \models p_i$, for $i=1,\ldots,n$, the conclusion holds as it is the same. Supposing the left premise is true because $\Delta^q \models q$, the conclusion also holds, as $\Delta^q = \Delta$.

    \item Supposing the right premise is valid because $\exists i \displaystyle (\bigcup_{k=1}^{i}{\Upsilon_k^{p_k}}) \models p_i$, for $i=1,\ldots,n$, then conclusion also holds. 
    \end{itemize}

  \textbf{restart} Here, we have three cases to evaluate.
    \begin{itemize}
    \item Supposing the premise is valid because $\Upsilon_1^{p_1}, \Upsilon_2^{p_2}, \ldots, \Upsilon_i^{p_i} \models p_i $, then $\exists i \displaystyle (\bigcup_{k=1}^{i}{\Upsilon_k^{p_k}}) \models p_i$, for $i=1,\ldots,n$. The conclusion is also valid. 
    \item Supposing the premise is valid because $\exists j \displaystyle (\bigcup_{k=i+1}^{j}{\Upsilon_k^{p_k}}) \models p_j$, for $j=i+1,\ldots,n$, then conclusion also holds.
    \item Supposing the premise is valid because $\Delta^q \models q$, then $\Delta \models q$ and, as  $\Delta' \subseteq \Delta$, $\Delta', \Delta \models q$. 
    \end{itemize}

  \textbf{\imply-right} 
    \begin{itemize}
    \item Supposing the premise is valid because $\Upsilon_1^{p_1}, \Upsilon_2^{p_2}, \ldots, \Upsilon_i^{p_i} \models p_i $, then $\exists i \displaystyle (\bigcup_{k=1}^{i}{\Upsilon_k^{p_k}}) \models p_i$, for $i=1,\ldots,n$. This is also valid in the conclusion. 
    \item Supposing the premise is valid because $\Delta', \Delta, \alpha \models \beta$, then every Kripke model that satisfies $\Delta'$, $\Delta$ and $\alpha$ also satisfies $\beta$. We know that $\Delta' \subseteq \Delta$. Those models also satisfies $\alpha \imply \beta$ and, then, conclusion also holds.
    \end{itemize}

  \textbf{focus}
    \begin{itemize}
    \item Supposing the premise is valid because $\Upsilon_1^{p_1}, \Upsilon_2^{p_2}, \ldots, \Upsilon_i^{p_i} \models p_i $, then $\exists i \displaystyle (\bigcup_{k=1}^{i}{\Upsilon_k^{p_k}}) \models p_i$, for $i=1,\ldots,n$. This is also valid in the conclusion. 
    \item Supposing the premise is valid because $\Delta', \alpha, \Delta, \alpha \models \beta$, then the conclusion also holds as $\Delta', \Delta, \alpha \models \beta$.
    \end{itemize}
\end{proof}

From Proposition~\ref{prop:sound}, we conclude that \lmt~only prove tautologies. 

\subsection{Completeness}
\label{sec:completeness}

By Observation~\ref{obs:strategy}.\ref{obs:strategy-2}~we know that a top sequent of an open branch in an attempt proof tree has the general form below, where $q$ is an atomic formula: 
\begin{equation*}
  \Seq{\{\Delta'\}, \Upsilon_1^{p_1}, \Upsilon_2^{p_2}, ..., \Upsilon_n^{p_n}, \Delta}{[p_1, p_2,..., p_n], q}  
\end{equation*}

From Definition~\ref{def:valid} and considering that $\Delta' \in \Delta$ in any sequent of an attempt proof tree following our proposed strategy, we can \todo[color=yellow!100]{ch04:lmt - Doubt: Jean-Pierre Jouannaud said that is not a definition, but a proposition}define a invalid sequent as follows:
\begin{definition}
\label{def:invalid}  
  A sequent is \emph{invalid} if and only if  $\Delta \nvDash q$ and $\forall i \displaystyle (\bigcup_{k=1}^{i}{\Upsilon_k^{p_k}}) \nvDash p_i$, for $i=1, \ldots, n$.     
\end{definition}

Our proof of completeness starts with a definition of atomic formulas on the left and right sides of a top sequent.

\begin{definition}
\label{def:countermodel}
We can construct a Kripke counter-model $\mathcal{M}$ that satisfies atomic formulas on the right side of a top sequent, and that falsifies the atomic formula on the left. This construction can be done in the following way:

\begin{enumerate}
\item\label{def:countermodel-1} The model $\mathcal{M}$ has an initial world $w_0$.

\item\label{def:countermodel-2} By the proof strategy, we can conclude that, in any sequent of the proof tree, $\Upsilon_1^{p_1} \subseteq \Upsilon_2^{p_2} \subseteq \cdots \subseteq \Upsilon_n^{p_n} \subseteq \Delta$. We create a world in the model $\mathcal{M}$ corresponding for each one of these bags of formulas and, using the inclusion relation between them, we define a respective accessibility relation in the model $\mathcal{M}$ between such worlds. That is, we create worlds $w_{\Upsilon_1^{p_1}}, w_{\Upsilon_2^{p_2}}, \ldots, w_{\Upsilon_n^{p_n}}, w_{\Delta}$ related in the following form: $w_{\Upsilon_1^{p_1}} \preceq w_{\Upsilon_2^{p_2}} \preceq \cdots \preceq w_{\Upsilon_n^{p_n}} \preceq w_{\Delta}$. As $w_0$ is the first world of $\mathcal{M}$, it precedes $w_{\Upsilon_1^{p_1}}$, that is, $w_0 \preceq w_{\Upsilon_1^{p_1}}$ is also included in the accessibility relation. If $\Upsilon_i^{p_i} = \Upsilon_{i+1}^{p_{i+1}}$, for $i = 1, \ldots, n$, then the associated worlds that correspond to those sets have to be collapsed in a single world $w_{\Upsilon_i^{p_i}-\Upsilon_{i+1}^{p_{i+1}}}$. In this case, the previous relation $w_{\Upsilon_i^{p_i}} \preceq w_{\Upsilon_{i+1}^{p_{i+1}}}$ is removed from the $\preceq$ relation of the model $\mathcal{M}$ and the pairs $w_{\Upsilon_{i-1}^{p_{i-1}}} \preceq w_{\Upsilon_i^{p_i}}$ and $w_{\Upsilon_{i+1}^{p_{i+1}}} \preceq w_{\Upsilon_{i+2}^{p_{i+2}}}$ become respectively $w_{\Upsilon_{i-1}^{p_{i-1}}} \preceq w_{\Upsilon_i^{p_i}-\Upsilon_{i+1}^{p_{i+1}}}$ and $w_{\Upsilon_i^{p_i}-\Upsilon_{i+1}^{p_{i+1}}} \preceq w_{\Upsilon_{i+2}^{p_{i+2}}}$.
  
\item\label{def:countermodel-3} By the Definition~\ref{def:invalid} of an invalid sequent, $\Delta \nvDash q$. The world $w_{\Delta}$ will be used to guarantee this. We set $q$ false in $w_{\Delta}$, i.e, $\mathcal{M} \nvDash_{w_{\Delta}} q$. We also set every atomic formula that is in $\Delta$ as true, i.e., $\forall p, p \in \Delta, \mathcal{M} \vDash_{w_{\Delta}} p$.
  
\item\label{def:countermodel-4} By the Definition~\ref{def:invalid} of an invalid sequent, we also need that $\forall i \displaystyle (\bigcup_{k=1}^{i}{\Upsilon_k^{p_k}}) \nvDash p_i$, for $i=1, \ldots n$. Thus, for each $i, i=1, \ldots, n$ we  set $\mathcal{M} \nvDash_{w_{{\Upsilon_i^{p_i}}}} p_i$ and $\forall p, p \in \Upsilon_i^{p_i}$, being $p$ an atomic formula, $\mathcal{M} \vDash_{w_{{\Upsilon_i^{p_i}}}} p$. In the case of collapsed worlds, we keep the satisfaction relation of the previous individual worlds in the collapsed one. 

\item\label{def:countermodel-5} In $w_0$ set every atomic formula inside the []-area (all of them are atomic) as false. That is, $\mathcal{M} \nvDash_{w_0} p_i$, for $i=1, \ldots, n$. We also set the atomic formula outside the []-area false in this world: $\mathcal{M} \nvDash_{w_0} q$. Those definitions make $w_0$ consistent with the $\preceq$ relation of $\mathcal{M}$.
\end{enumerate}
\end{definition}

The Figure~\ref{fig:countermodel} shows the general shape of counter-models following the steps enumerated above. This procedure to construct counter-model allows us to state the following lemma:

\begin{figure}[h]
{
     \Tree [.{$\mathbf{w_0}$ \\ $\nvDash p_1,\ldots, p_n$ \\ $\nvDash q$} [.{$\mathbf{w_{{\Upsilon_1^{p_1}}}}$ \\ $\forall p, p \in \Upsilon_1^{p_1}$ , $p$ atomic, $\vDash p$ \\ $\nvDash p_1$} [.{$\vdots$} [.{$\mathbf{w_{{\Upsilon_n^{p_n}}}}$ \\ $\forall p, p \in \Upsilon_n^{p_n}$ , $p$ atomic, $\vDash p$ \\ $\nvDash p_n$} [.{$\mathbf{w_\Delta}$ \\ $\forall p, p \in \Delta$ , $p$ atomic, $\vDash p$ \\ $\nvDash q$} ]]]]]    
}
\caption{General schema of counter-models}
\label{fig:countermodel}
\end{figure}

\begin{lemma}
\label{lemma:countermodel}
Let $S$ be a top sequent of an open branch in an attempt proof tree generated by the strategy presented in Section~\ref{sec:strategy}. Then we can construct a Kripke model $\mathcal{M}$ with a world $u$ where $\mathcal{M} \nvDash_u S$, using the proposed counter-model generation procedure.
\end{lemma}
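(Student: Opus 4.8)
The plan is to apply the construction of Definition~\ref{def:countermodel} to $S$, to check that the resulting structure $\mathcal{M}$ is a genuine Kripke model, and then to verify that the world $w_\Delta$ of $\mathcal{M}$ refutes $S$; so the world $u$ of the statement will be $w_\Delta$. By Observation~\ref{obs:strategy}.\ref{obs:strategy-2}, $S$ has the form~\eqref{eq:sequent} with $\varphi = q$ atomic, and by Observation~\ref{obs:strategy}.\ref{obs:strategy-3}--\ref{obs:strategy-4} together with step~\ref{def:countermodel-2} we have $\Delta' \subseteq \Delta$ and $\Upsilon_1^{p_1} \subseteq \cdots \subseteq \Upsilon_n^{p_n} \subseteq \Delta$. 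Hence every formula occurring on the left of $S$ belongs to $\Delta$, and ``$\mathcal{M} \nvDash_{w_\Delta} S$'' reduces to the two assertions that $w_\Delta$ forces every formula of $\Delta$ and does not force $q$.

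First I would check that $\mathcal{M}$ is well defined. After the collapsing of the worlds attached to equal consecutive bags (step~\ref{def:countermodel-2}), the accessibility relation is the reflexive and transitive closure of the finite chain $w_0 \preceq w_{\Upsilon_1^{p_1}} \preceq \cdots \preceq w_{\Upsilon_n^{p_n}} \preceq w_\Delta$, so it is a partial order. Persistence of the valuation follows from the chain of inclusions among the bags, once one notes that the only atoms declared false at a world are $q$ at $w_\Delta$, $p_i$ at $w_{\Upsilon_i^{p_i}}$, and the []-area atoms together with $q$ at $w_0$; none of these can already occur in an earlier bag of the chain, for otherwise the matching sequent of the branch would be an axiom and the branch would be closed, contradicting Observation~\ref{obs:strategy}.\ref{obs:strategy-2}.

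The heart of the proof is a forcing lemma proved by induction on formula complexity and stated uniformly over the worlds of $\mathcal{M}$: (P1) for every subformula $\gamma$ of the initial formula and every bag $X \in \{\Upsilon_1^{p_1}, \ldots, \Upsilon_n^{p_n}, \Delta\}$, if $\gamma \in X$ then $\mathcal{M} \vDash_{w_X} \gamma$; and (P2) the atom charged to $X$ --- that is, $p_i$ when $X = \Upsilon_i^{p_i}$ and $q$ when $X = \Delta$ --- is not forced at $w_X$, and $w_0$ forces none of the []-area atoms nor $q$. Assertion (P2) is purely atomic and is read off from steps~\ref{def:countermodel-3}--\ref{def:countermodel-5}. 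In (P1) the atomic case is again immediate from the construction; for $\gamma = \alpha \imply \beta \in X$ one fixes a world $w_Y \succeq w_X$ with $\mathcal{M} \vDash_{w_Y} \alpha$ and must conclude $\mathcal{M} \vDash_{w_Y} \beta$. Here I would use that the branch is completely expanded under $\mathcal{S}$-\lmt: $\gamma$ has been focused, and the context $(\gamma, r)$ --- with $r$ the atom charged to $Y$ --- has been tried by an \imply-left application, whose right premise puts $\beta$ on the left and keeps the goal $r$, while its left premise turns the goal into $\alpha$, pushes $r$ into the []-area, and relabels the current left context as a fresh $\Upsilon$-bag. If the open branch follows the right premise, then $\beta$ persists to the left of $S$, so the induction hypothesis gives $\mathcal{M} \vDash_{w_Y} \beta$; if it follows the left premise, then by the induction hypothesis $\alpha$ fails at the $\Upsilon$-world of $S$ that refutes its atomic tail, and by persistence up the chain $\gamma$ is forced at $w_Y$. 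Taking $X = \Delta$ and ranging over $\gamma \in \Delta$, and combining with $\mathcal{M} \nvDash_{w_\Delta} q$ from (P2), yields $\mathcal{M} \nvDash_{w_\Delta} S$.

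I expect the main obstacle to be exactly the \imply-left case of (P1): matching the \emph{history of the open branch} --- which premise of each \imply-left is followed, when the restart rule fires and on which []-area atom --- to the \emph{linear world structure} of $\mathcal{M}$, including the collapsing of worlds for equal consecutive bags and the precise identification of ``the $\Upsilon$-world that refutes the atomic tail of $\alpha$''. One must phrase the induction hypothesis uniformly over all worlds rather than only $w_\Delta$, keep in mind that persistence is available only downward along the chain, and exploit that the restart clause of $\mathcal{S}$-\lmt is precisely what forbids a focused-and-expanded implication from having its antecedent forced and its consequent unforced at one and the same world. The remaining ingredients --- well-definedness of $\mathcal{M}$, the atomic cases, and the reduction of the left side of $S$ to $\Delta$ --- are routine given the observations of Section~\ref{sec:strategy}.
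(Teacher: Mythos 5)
Your overall architecture matches the paper's: build the chain model of Definition~\ref{def:countermodel}, argue by induction on the complexity of formulas that every formula in each bag is forced at its associated world while the charged atoms fail, and conclude that the top sequent is invalidated. The genuine problem is the implicational case of your forcing lemma (P1) --- exactly the step you flag as ``the main obstacle'' and never discharge. Two things go wrong in the sketch you give. First, the claim that for every world $w_Y \succeq w_X$ forcing $\alpha$ the context $(\gamma, r)$, with $r$ the atom charged to $Y$, has been tried by an \imply-left application is an unproved combinatorial statement about how the branch history lines up with the chain of worlds; nothing in $\mathcal{S}$-\lmt is cited that yields it. Second, even granting it, both sub-cases use persistence in the wrong direction: in the right-premise case $\beta$ is only added to $\Delta$, so the induction hypothesis gives $\vDash_{w_\Delta}\beta$, which does not transfer \emph{down} to $w_Y \preceq w_\Delta$; and in the left-premise case, failure of $\alpha$ at one $\Upsilon$-world says nothing about worlds above it (in particular $w_\Delta$ may still force $\alpha$), so it does not yield $\vDash_{w_Y}\alpha\imply\beta$.

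The paper closes this case by a purely syntactic use of the strategy instead of tracking which premise the open branch followed. For a fully expanded open branch with nonempty []-area it argues that every $\alpha\imply\beta\in\Delta$ has $\alpha \equiv \alpha_1\imply\cdots\imply\alpha_m\imply q$ with $q$ the refuted goal atom and $\alpha_1,\dots,\alpha_m\in\Delta$ of smaller degree; by the induction hypothesis these are true at the top world $w_\Delta$ while $q$ is false there, so $\alpha$ fails at $w_\Delta$, hence (downward monotonicity) at every $w_{\Upsilon_i^{p_i}}$, and $\alpha\imply\beta$ is forced at all worlds. When the []-area is empty the model has only $w_0$ and $w_\Delta$, and one instead shows $\vDash_{w_\Delta}\beta$ via the atomic tail of $\beta$. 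To repair your proof, either establish the history-to-worlds matching and fix the persistence directions, or adopt the paper's shape argument. A smaller point: by Definition~\ref{def:invalid}, invalidating $S$ also requires refuting each disjunct $(\bigcup_{k\le i}\Upsilon_k^{p_k})\models p_i$, so the conclusion cannot be reduced to the two assertions about $w_\Delta$ alone; your (P1)/(P2) at the $\Upsilon$-worlds must enter the final step, not only the facts at $w_\Delta$.
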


\begin{proof}
We can prove this by induction on the degree of formulas in $\Delta$. From Definition~\ref{def:countermodel}, items~\ref{def:countermodel-3} and~\ref{def:countermodel-4} we know the value of each atomic formula in the worlds $w_\Delta$ and in each world $w_{\Upsilon_i^{p_i}}$. The inductive hypothesis is that every formula in $\Delta$ is true in $w_\Delta$. Thus, as $\Upsilon_1^{p_1} \subseteq \Upsilon_2^{p_2} \subseteq \cdots \subseteq \Upsilon_n^{p_n} \subseteq \Delta$, every formula in $\Upsilon_i^{p_i}$ is true in $w_{\Upsilon_i^{p_i}}$, for $i=1, \ldots, n$.

Thus, we have two cases to consider:

\begin{enumerate}
\item The top sequent is in the rightmost branch of the proof tree ([]-area is empty).
\label{lemma:countermodel-1}
\\
Let $\alpha\imply\beta$ be a formula in \mil~that is in $\Delta$. We show that $\mathcal{M} \vDash_{w_\Delta} \alpha\imply\beta$. In this case, by the proof strategy, $\beta \equiv (\beta_1 \imply (\beta_2 \imply \cdots \imply (\beta_m \imply p)))$, where $p$ is an atomic formula. By Definition~\ref{def:countermodel}.\ref{def:countermodel-3} $\vDash_{w_\Delta} p$. As $w_\Delta$ has no accessible world from it (except for itself), $\vDash_{w_\Delta} \beta$. By the proof strategy, $\beta_m \imply p, \beta_{m-1} \imply \beta_m \imply p, \ldots, \beta_2 \imply \cdots \imply \beta_{m-1} \imply \beta_m \imply p, \beta_1 \imply \beta_2 \imply \cdots \imply \beta_{m-1} \imply \beta_m \imply p$ also are in $\Delta$. The degree of each of these formulas is less than the degree of $\alpha\imply\beta$ and, by the induction hypothesis, all of them are true in $w_\Delta$. Thus $\vDash_{w_\Delta} \beta$ and $\vDash_{w_\Delta} \alpha\imply\beta$.
\\
As the []-area is empty, the sets $\Upsilon_i^{p_i}$ are also empty. The counter-model only has two words, $w_0$ and $w_\Delta$, following the properties described in Definition~\ref{def:countermodel}.

\item The top sequent is in any other branch that is not the rightmost one ([]-area is not empty).
\label{lemma:countermodel-2}
\\
Let $\alpha\imply\beta$ be a formula in \mil~that is in $\Delta$. We show that $\mathcal{M} \vDash_{w_\Delta} \alpha\imply\beta$. In this case, by the proof strategy, $\alpha \equiv (\alpha_1 \imply (\alpha_2 \imply \cdots \imply (\alpha_m \imply q)))$, where $q$ is the atomic formula in the right side of the sequent, out of the []-area. By Definition~\ref{def:countermodel}.\ref{def:countermodel-3} $\nvDash_{w_\Delta} q$. By the proof strategy, $\alpha_1, \alpha_2, \dots, \alpha_m$ also are in $\Delta$. The degree of each of these formulas is less than the degree of $\alpha\imply\beta$ and, by the induction hypothesis, all of them are true in $w_\Delta$. This ensures $\nvDash_{w_\Delta} \alpha$ and $\vDash_{w_\Delta} \alpha\imply\beta$.
\\
Considering now a formula $\alpha\imply\beta$ from \mil~that is in $\Upsilon_i^{p_i}$. By Definition~\ref{def:countermodel}.\ref{def:countermodel-2}, $\alpha\imply\beta$ also belongs to $\Delta$. From the last paragraph, we show that, for any formula $\alpha\imply\beta \in \Delta$, $\nvDash_{w_\Delta} \alpha$. As $\nvDash_{w_\Delta} \alpha$, by the accessibility relation of the Kripke model, $\nvDash_{w_{\Upsilon_i^{p_i}}} \alpha$, for each $i=1, \ldots, n$. Thus, the value of $\alpha \imply \beta$ is defined in any of these worlds by the value of $\alpha \imply \beta$ in $w_\Delta$, that we showed to be true. Thus, $\vDash_{w_{\Upsilon_i^{p_i}}} \alpha\imply\beta$. 
   
\end{enumerate}

As stated in Definition~\ref{def:countermodel}.\ref{def:countermodel-2}, $\Upsilon_1^{p_1} \subseteq \Upsilon_2^{p_2} \subseteq \cdots \subseteq \Upsilon_n^{p_n} \subseteq \Delta$ and following the accessibility relation rule of the \mil~semantic (relations are reflexive and transitive) we conclude that:

\begin{align*}
  \mathcal{M} &\vDash_{w_0} \Upsilon_1^{p_1}, \nvDash_{w_0} p_1 \\
              &\vDash_{w_0} \Upsilon_2^{p_2}, \nvDash_{w_0} p_2 \\
              & \vdots \\
              &\vDash_{w_0} \Upsilon_n^{p_n}, \nvDash_{w_0} p_n \\
              &\vDash_{w_0} \Delta, \nvDash_{w_0} q \\
\end{align*}

Proving Lemma~\ref{lemma:countermodel}.
\end{proof}

\bigskip

\begin{definition}\label{def:double}

  A rule is said \emph{invertible} or \emph{double-sound} iff the validity of
  its conclusion implies the validity of its premises.
\end{definition}

In other words, by Definition~\ref{def:double}, we know that a counter-model for
a top sequent of a proof tree that can not be expanded anymore can be used to
construct a counter-model to every sequent in the same branch of the tree until
the conclusion (root sequent). In the case of the \imply-right rule in our
system, not just if the premise of the rule has a counter-model, then so does the
conclusion, but the same counter-model will do. \citet{weich1998} called rules
with this property \emph{preserving counter model}. Dyckhoff (personal
communication, 2015) proposed to call this kind of rules of \emph{strongly
  invertible} rules. In the case of the \imply-left rule, this is the same when
one of the premises is valid, but, considering the case that both premises are
not valid, we need to mix the counter-models of both sides to construct the
counter-model for the conclusion of the rule. This way to produce counter-models
is what we call a \emph{weakly invertible} rule.

\begin{lemma}
\label{lemma:invertible}
  The rules of \lmt are invertible. 
\end{lemma}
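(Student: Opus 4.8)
To prove Lemma~\ref{lemma:invertible} I would argue rule by rule, showing for each rule of \lmt~that whenever the conclusion is valid (in the sense of Definition~\ref{def:valid}), every premise is valid as well. The natural strategy is to prove the contrapositive: assuming a premise is \emph{invalid} (Definition~\ref{def:invalid}), exhibit a counter-model for the conclusion. For the unary rules (\imply-right, focus, restart) this amounts to a direct inspection; for the \imply-left rule it is the genuinely interesting case, because it has two premises and one must combine the two counter-models, which is exactly the \emph{weakly invertible} situation flagged in the discussion preceding the lemma.

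\textbf{The unary rules.} First I would treat \imply-right. Suppose the conclusion \Seq{\{\Delta'\}, \Upsilon_1^{p_1}, \ldots, \Upsilon_n^{p_n}, \Delta}{[p_1, \ldots, p_n], \alpha \imply \beta} is valid. If validity holds because of the $\Upsilon$-disjunct, $\exists i\,(\bigcup_{k=1}^{i}\Upsilon_k^{p_k}) \models p_i$, this is unchanged in the premise, so we are done. Otherwise $\Delta', \Delta \models \alpha \imply \beta$; then every Kripke model satisfying $\Delta', \Delta$ and $\alpha$ at a world also satisfies $\beta$ there, i.e.\ $\Delta', \Delta, \alpha \models \beta$, which is precisely the validity of the premise. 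The focus rule is even simpler: the premise differs from the conclusion only by an extra copy of $\alpha$ inside the $\{\cdot\}$ focus area and in $\Delta$, and since $\alpha \in \Delta'$ already forces $\alpha \in \Delta$ by Observation~\ref{obs:strategy}.\ref{obs:strategy-3}, the semantic content $\Delta', \Delta \models \beta$ is identical on both sides. For restart, one checks that the premise relocates $q$ into the []-area (as $p_{n+1}$, with $\Delta$ relabelled as $\Delta^q$) and re-opens the right side with some earlier $p_i$; validity of the conclusion via $\Delta', \Delta \models q$ becomes the new $\Upsilon$-disjunct for the premise (since $\Delta^q = \Delta$ and $\Delta$ now sits in the last $\Upsilon$-slot), and validity via an already-existing $\Upsilon$-disjunct is preserved because the $\Upsilon$-chain only grows. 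These are essentially the mirror images of the cases already carried out in the soundness proof (Proposition~\ref{prop:sound}), read in the opposite direction.

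\textbf{The \imply-left rule — the main obstacle.} Here the conclusion is \Seq{\{\alpha\imply\beta,\Delta'\}, \overline{\Upsilon}, \Delta}{[\bar p], q} with left premise \Seq{\{\alpha\imply\beta,\Delta'\}, \overline{\Upsilon}, \Delta^q, \Delta}{[\bar p, q], \alpha} and right premise \Seq{\{\alpha\imply\beta,\Delta'\}, \overline{\Upsilon}, \Delta, \beta}{[\bar p], q}. I would show the contrapositive: if \emph{both} premises are invalid, the conclusion is invalid. By Lemma~\ref{lemma:countermodel} applied to (a branch through) each invalid premise, invalidity of the right premise gives a model $\mathcal{M}_r$ with a world $w_\Delta^r$ where $\Delta, \beta$ all hold but $q$ fails and all the $\Upsilon$-disjuncts fail; invalidity of the left premise gives a model $\mathcal{M}_\ell$ whose bottom world $w^\ell_{\Delta^q}$ forces $\Delta$ (indeed $\Delta^q = \Delta$) but not $\alpha$, and where $q$ appears as an atom forced false at the $w_0$-level of the chain. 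The construction then glues the left model \emph{below} the right one: one takes $\mathcal{M}_r$, and where its root $w_0^r$ falsifies $q$ (and the $p_i$), one prepends the spine of $\mathcal{M}_\ell$ so that the world falsifying $\alpha$ sits below the world forcing $\beta$. Since $\alpha$ fails at the lower world, $\alpha \imply \beta$ is vacuously true there despite $\beta$ not yet being available; and above, where $\beta$ holds, $\alpha\imply\beta$ holds too — so $\alpha\imply\beta$ is globally forced, which is needed because $\alpha\imply\beta \in \Delta$ and $\Delta' \subseteq \Delta$. The atom $q$ is kept false at the bottom world of the combined chain, and the original $\Upsilon$-disjuncts of the conclusion (a sub-chain of $\overline{\Upsilon} \subseteq \Delta$) are still falsified because the gluing only adds worlds below. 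Hence the combined model witnesses $\Delta \nvDash q$ and $\forall i\,(\bigcup_{k=1}^{i}\Upsilon_k^{p_k}) \nvDash p_i$, i.e.\ the conclusion is invalid. The delicate points I expect to have to be careful about are: checking that the monotonicity (heredity) condition of Kripke models is respected when two models are stacked — in particular that every atom true low stays true high, which is why one must verify the atoms of $\Delta$ are consistently set across $w^\ell_{\Delta^q}$ and $w_\Delta^r$; and handling the collapsing of equal $\Upsilon$-worlds exactly as in Definition~\ref{def:countermodel}.\ref{def:countermodel-2} so the glued structure is still of the canonical shape of Figure~\ref{fig:countermodel}. Once \imply-left is settled, the lemma follows by combining the four cases, and it immediately yields (together with Lemma~\ref{lemma:countermodel}) that a counter-model at an un-expandable top sequent propagates down the whole branch to the root.
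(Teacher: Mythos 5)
Your treatment of the unary rules (\imply-right, focus, restart) matches the paper's argument, but the \imply-left case has a genuine gap: you announce ``the contrapositive: if \emph{both} premises are invalid, the conclusion is invalid,'' which is not the contrapositive of invertibility. By Definition~\ref{def:double}, invertibility requires that validity of the conclusion yield validity of \emph{each} premise, so the contrapositive you must prove is ``if \emph{at least one} premise is invalid, the conclusion is invalid.'' The paper's proof accordingly splits \imply-left into three cases: right premise invalid with left valid, left premise invalid with right valid, and both invalid; only the last needs any mixing of counter-models. The two single-premise cases are immediate from Definition~\ref{def:invalid} (invalidity of the left premise already contains $\Delta^q \nvDash q$, hence $\Delta \nvDash q$ together with the failed $\Upsilon$-disjuncts; invalidity of the right premise gives $\alpha\imply\beta, \Delta', \Delta, \beta \nvDash q$ and a fortiori $\Delta \nvDash q$), and they are exactly the cases needed for the completeness corollary, where the open branch passes through one premise of an \imply-left while the sibling premise may close as an axiom and hence be valid. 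Your plan simply does not cover them.

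For the both-invalid case your construction also diverges from the paper's and is shaky as described: stacking $\mathcal{M}_\ell$ below $\mathcal{M}_r$ and claiming $\alpha\imply\beta$ is ``vacuously true'' at the lower world misreads Kripke forcing --- an implication is forced at a world only if it is respected at \emph{every} world above it, so you would have to check all of $\mathcal{M}_r$'s worlds from the prepended spine and also verify heredity of atoms across the junction, neither of which is automatic. The paper instead invokes Lemma~\ref{lemma:countermodel} to observe that the falsifying worlds $u_1$ and $u_2$ are the \emph{roots} of their respective canonical counter-models and merges these two roots into a single root $u_3$ (placing both models above one shared bottom world), from which the required facts --- $\nvDash_{u_3} \alpha$, $\nvDash_{u_3} q$, and the failure of each $\Upsilon$-disjunct --- are read off directly. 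With the two missing single-premise cases added and the gluing replaced (or carefully repaired along the paper's root-merging lines), your overall rule-by-rule plan would align with the paper's proof.
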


\begin{proof}
  We show that the rules of \lmt~are invertible when considering a proof tree
  labeled in the schema presented in Section~\ref{sec:strategy}. We prove that
  for the structural rules (focus and restart) and \imply-right, from the
  existence of a Kripke model that makes the premise of the rule invalid follows
  that the conclusion is also invalid. For the \imply-left rule, from the Kripke
  models of the premises, we can construct a Kripke model that also makes the
  conclusion of the rule invalid.

\textbf{\imply-right} If the premise is invalid, then there is a Kripke model $\mathcal{M}$ where $\Delta', \Delta, \alpha \nvDash \beta$ and $\forall i \displaystyle (\bigcup_{k=1}^{i}{\Upsilon_k^{p_k}}) \nvDash p_i$, for $i=1, \ldots, n$ in a given world $u$ of $\mathcal{M}$. Thus, in the conclusion we have:
\begin{itemize}
\item By the definition of semantics of Section~\ref{sec:mimp-semantic}, there have to be a world $v$, $u \preceq v$, in the model $\mathcal{M}$ where $\Delta', \Delta, \alpha$ are satisfied and where $\beta$ is not. Thus, in $u$, $\alpha\imply\beta$ can not hold.
\item By the model $\mathcal{M}$, for each $i$, exists a world $v_i$, $u \preceq v_i$, where $\vDash_{v_i} \Upsilon_i^{p_i}$ and $\nvDash_{v_i} p_i$.
\item Thus, the conclusion is also invalid.
\end{itemize}

\textbf{\imply-left} Considering that one of the premises of \imply-left is not valid, the conclusion also is. We have to evaluate three cases:
\begin{enumerate}
  
\item \textbf{The right premise is invalid but the left premise is valid}. Then there is a Kripke model $\mathcal{M}$ where $\alpha \imply \beta, \Delta', \Delta, \beta \nvDash q$ and $\forall i \displaystyle (\bigcup_{k=1}^{i}{\Upsilon_k^{p_k}}) \nvDash p_i$, for $i=1, \ldots, n$ from a given world $u$. Thus, in the conclusion we have: 
  \begin{itemize}
  \item By the model $\mathcal{M}$, there have to be a world $v$, $u \preceq v$, in the model where $\alpha \imply \beta, \Delta', \Delta, \beta$ are satisfied and where $q$ is not.
  \item By the model $\mathcal{M}$, for each $i$, exists a world $v_i$, $u \preceq v_i$, where $\vDash_{v_i} \Upsilon_i^{p_i}$ and $\nvDash_{v_i} p_i$.
  \item Thus, the conclusion is invalid too.  
  \end{itemize}

\item \textbf{The left premise is invalid but the right premise is valid}. Then there is a Kripke model $\mathcal{M}$ where $\alpha \imply \beta, \Delta', \Delta \nvDash \alpha$ and $\forall i \displaystyle (\bigcup_{k=1}^{i}{\Upsilon_k^{p_k}}) \nvDash p_i$, for $i=1, \ldots, n$, and $\Delta^q \nvDash q$ from a given world $u$. Thus, in the conclusion we have: 
  \begin{itemize}
  \item By the model $\mathcal{M}$, there have to be a world $v$, $u \preceq v$, in the model where $\alpha \imply \beta, \Delta', \Delta$ are satisfied and where $\alpha$ is not. 
  \item By the model $\mathcal{M}$, for each $i$, exists a world $v_i$, $u \preceq v_i$, where $\vDash_{v_i} \Upsilon_i^{p_i}$ and $\nvDash_{v_i} p_i$.
  \item We also know by $\mathcal{M}$ that there is a world $v_{\Delta^q}$, $u \preceq v_{\Delta^q}$, where $\vDash_{v_{\Delta^q}} \Delta^{q}$ and $\nvDash_{v_{\Delta^q}} q$. We also have that $\Delta^q = \Delta$ and that $\alpha \imply \beta \in \Delta$. Therefore, $\vDash_{v_{\Delta^q}} \Delta'$ and $\vDash_{v_{\Delta^q}} \alpha \imply \beta$.
  \item Thus, the conclusion can not be valid.    
  \end{itemize}

\item \textbf{Both left and right premises are invalid}. Then there are two models $\mathcal{M}_1$ and $\mathcal{M}_2$, from the right and left premises respectively. In $\mathcal{M}_1$ there is a world $u_1$ that makes the right sequent invalid as described in item 1. In $\mathcal{M}_2$ there is a world $u_2$ that makes the sequent of the left premise invalid as described in item 2. Considering the way Kripke models are constructed based on Lemma~\ref{lemma:countermodel}, we know that $u_1$ and $u_2$ are root worlds of their respective counter-models. Thus, converting the two models into one, $\mathcal{M}_3$,  by mixing $u_1$ and $u_2$ in the root of $\mathcal{M}_3$, called $u_3$, we have that in $u_3$:
  \begin{itemize}
  \item $\alpha \imply \beta, \Delta', \Delta$ are satisfied and $\alpha$ is not.
  \item for $i=1, \ldots, n$, we have that $\vDash_{u_3} \Upsilon_i^{p_i}$ and $\nvDash_{u_3} p_i$.
  \item $\nvDash_{u_3} q$
  \item Thus, the conclusion is also invalid. 
  \end{itemize}
\end{enumerate}

\textbf{focus} If we have a model that invalidates the premise, this model also invalidates the conclusion as the sequents in the premise and in the conclusion are the same despite the repetition of the focused formula $\alpha$.
 
\textbf{restart} If the restart premise is invalid, then there is a Kripke model $\mathcal{M}$ and a world $u$ from which $\Upsilon_1, \Upsilon_2, \ldots, \Upsilon_i \nvDash p_i$ and $\forall j \displaystyle (\bigcup_{k=1}^{j}{\Upsilon_k^{p_k}}) \models p_j$, for $j = i+1, \ldots, n$, and $\Delta^q \nvDash q$. Thus, in the conclusion we have:
\begin{itemize}
  \item By the model $\mathcal{M}$, there have to be a world $v$, $u \preceq v$, in the model where $\Upsilon_1, \Upsilon_2, \ldots, \Upsilon_i$ are satisfied and where $p_i$ is not. Each $\Upsilon_k$ has the same formulas as $\Upsilon_k^{p_k}$ and, by the restart condition, we know that $\nvDash p_k$, for $k=1, \ldots, i$.
  \item By the model $\mathcal{M}$, for each $j$, exists a world $v_j$, $u \preceq v_j$, where $\vDash_{v_j} \Upsilon_j^{p_j}$ and $\nvDash_{v_j} p_j$.
  \item We also know by $\mathcal{M}$ that there is a world $v_q$, $u \preceq v_q$, where $\vDash_{v_q} \Delta^{q}$ and $\nvDash_{v_q} q$. We also have that $\Delta' \subseteq \Delta$. Therefore, $\vDash_{v_q} \Delta'$.
  \item Thus, the conclusion is invalid.   
\end{itemize}
\end{proof}

Now we can state a proposition about completeness of~\lmt:

\begin{proposition}
  \lmt~is complete regarding the proof strategy presented in Section~\ref{sec:strategy}
\end{proposition}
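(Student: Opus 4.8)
The plan is to prove completeness by contraposition, relying on the three ingredients already established for the strategy $\mathcal{S}$-\lmt: termination (Theorem~\ref{theo:terminate}), the counter-model construction for top sequents of open branches (Lemma~\ref{lemma:countermodel}), and invertibility of every rule (Lemma~\ref{lemma:invertible}). Concretely, given a \mil~tautology $\alpha$, I would run $\mathcal{S}$-\lmt on the input sequent $\Seq{\{\}}{[],\alpha}$; by Theorem~\ref{theo:terminate} this produces a finite, completely expanded tree $\mathcal{T}$, and it suffices to show that $\mathcal{T}$ is a proof, i.e.\ that all of its branches are closed. Assuming otherwise, I will exhibit a Kripke model falsifying $\alpha$, contradicting the hypothesis that $\alpha$ is a tautology; together with soundness (Proposition~\ref{prop:sound}) this shows that \lmt~proves exactly the tautologies.

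The heart of the argument is the following claim, proved by induction on the height of a completely expanded tree $\mathcal{T}'$ produced by $\mathcal{S}$-\lmt: if $\mathcal{T}'$ has an open branch, then its root sequent is invalid in the sense of Definition~\ref{def:invalid}, and moreover the witnessing counter-model is of the canonical shape delivered by Definition~\ref{def:countermodel} (in particular it has an identifiable root world). In the base case $\mathcal{T}'$ is a single sequent which, not being an axiom, is itself the top sequent of an open branch, so Lemma~\ref{lemma:countermodel} supplies the counter-model directly. In the inductive step I inspect the rule applied at the root of $\mathcal{T}'$. If it is one of the unary rules (focus, restart, $\imply$-right), its unique subtree still contains an open branch (otherwise $\mathcal{T}'$ would have none), so by the induction hypothesis its premise is invalid, and Lemma~\ref{lemma:invertible} propagates the invalidity, with the same counter-model, down to the conclusion. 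If it is the binary $\imply$-left rule, then at least one of the two subtrees has an open branch, and by the induction hypothesis the corresponding premise is invalid; for the sibling premise either its subtree is a proof, in which case it is valid by Proposition~\ref{prop:sound} and the first two cases of Lemma~\ref{lemma:invertible} make the conclusion invalid using the single model at hand, or its subtree also has an open branch, in which case the induction hypothesis yields a counter-model for it and the third case of Lemma~\ref{lemma:invertible} merges the two root worlds into a single world invalidating the conclusion. In every case the conclusion of the root rule is invalid and the resulting model is again of the canonical shape, closing the induction.

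Applying the claim to $\mathcal{T}' = \mathcal{T}$ finishes the proof: if $\mathcal{T}$ were not a proof, its root $\Seq{\{\}}{[],\alpha}$ would be invalid, and since here $\Delta' = \Delta = \emptyset$ and the []-area is empty, Definition~\ref{def:invalid} gives $\emptyset \nvDash \alpha$, i.e.\ $\alpha$ is not a tautology, contradicting the hypothesis. Hence $\mathcal{T}$ is a proof of $\alpha$ in \lmt, and \lmt~is complete relative to $\mathcal{S}$-\lmt.

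I expect the delicate point to be the $\imply$-left step of the induction, and specifically the invariant that every counter-model occurring in the construction is of the form produced by Definition~\ref{def:countermodel}. The third case of Lemma~\ref{lemma:invertible} works only because the two models being merged each carry a genuine root world $u_1, u_2$ whose local valuation is the one fixed in items~\ref{def:countermodel-3}--\ref{def:countermodel-5}; after merging them into a common root one must re-verify the heredity condition of \mil~Kripke models and check that the merged model is again of the canonical shape, so that the induction may proceed at the next rule down. Threading this shape invariant through the induction, rather than merely recording "the conclusion is invalid" at each step, is where the real work lies; and it is Theorem~\ref{theo:terminate} (finiteness of $\mathcal{T}$) that makes the induction on height legitimate to begin with.
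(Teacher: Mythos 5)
Your proposal is correct and follows essentially the same route as the paper, whose proof simply cites the same three ingredients — termination (Theorem~\ref{theo:terminate}), the counter-model construction for open top sequents (Lemma~\ref{lemma:countermodel}), and invertibility of the rules (Lemma~\ref{lemma:invertible}) — and declares the result immediate. You merely make explicit the induction on the height of the expanded tree (including the case analysis at \imply-left and the shape invariant needed for merging counter-models) that the paper leaves implicit.
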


\begin{proof}
  It follows direct from Proposition~\ref{theo:terminate} (\lmt~terminates) and
  Lemma~\ref{lemma:countermodel} (we can construct a counter-model for a top
  sequent in a terminated open branch of \lmt) and Lemma~\ref{lemma:invertible}
  (the rules of \lmt~are invertible).
\end{proof}

\subsection{Examples}

\begin{example}  
  As an example, consider the Peirce formula, ${((A \to B) \to A) \to A}$, which
  is very known to be a Classic tautology, but not a tautology in Intuitionistic
  neither in Minimal Logic (which includes \mil). Proof~\eqref{proof:peirce}
  below shows the open branch of an attempt proof tree for this formula.
  Following our termination criteria, this branch should be higher than the
  fragment below, but, to help improve understanding, we stop that branch in the
  point from which proof search just produces repetition (similar to the use of
  a loop checker).

\begin{equation}
\label{proof:peirce}
{\footnotesize  
\infer[\to\mbox{right}-\to_{2}]
{\color{black}{{\color{black}\{\}} {\color{black}\Rightarrow_{0}} (((A\to_{0}B)\to_{1}A)\to_{2}A),{\color{black}[]}}}{
\infer[\mbox{focus}-\to_{1}]
{\color{black}{{\color{black}\{\}},\to_{1} {\color{black}\Rightarrow_{1}} {\color{black}[]},A}}{
\infer[\to\mbox{left}-\to_{1}]
{\color{black}{\to_{1},{\color{black}\{}\to_{1}{\color{black}\}} {\color{black}\Rightarrow_{2}} {\color{black}[]},A}}{
\infer[\to\mbox{right}-\to_{0}]
{\color{black}{\to_{1},{\color{black}\{}\to_{1}{\color{black}\}},(\to_{1})^{A} {\color{black}\Rightarrow_{3}} {\color{black}[}A{\color{black}]},\to_{0}}}{
\infer[\mbox{focus}-A]
{\color{black}{\to_{1},{\color{black}\{}\to_{1}{\color{black}\}},(\to_{1})^{A},A {\color{black}\Rightarrow_{5}} {\color{black}[}A{\color{black}]},B}}{
\infer[\to\mbox{left}-\to_{1}]
{\color{black}{\to_{1},(\to_{1})^{A},A,{\color{black}\{}\to_{1},A{\color{black}\}} {\color{black}\Rightarrow_{6}} {\color{black}[}A{\color{black}]},B}}{
\vdots & \infer[\mbox{restart}-A]
{\color{black}{\to_{1},(\to_{1})^{A},A,{\color{black}\{}\to_{1},A{\color{black}\}} {\color{black}\Rightarrow_{9}} {\color{black}[}A{\color{black}]},B}}{
\infer[\mbox{focus}-\to_{1}]
{\color{black}{(\to_{1})^{B},\to_{1},(A)^{B},{\color{black}\{\}} {\color{black}\Rightarrow_{11}} {\color{black}[}B{\color{black}]},A}}{
\infer[\to\mbox{left}-\to_{1}]
{\color{black}{(\to_{1})^{B},\to_{1},(A)^{B},{\color{black}\{}\to_{1}{\color{black}\}} {\color{black}\Rightarrow_{12}} {\color{black}[}B{\color{black}]},A}}{
\infer[\to\mbox{right}-\to_{0}]
{\color{black}{(\to_{1})^{B},\to_{1},(A)^{B},{\color{black}\{}\to_{1}{\color{black}\}},(\to_{1})^{A} {\color{black}\Rightarrow_{13}} {\color{black}[}B,A{\color{black}]},\to_{0}}}{
\infer[\mbox{focus}-A]
{\color{black}{(\to_{1})^{B},\to_{1},(A)^{B},{\color{black}\{}\to_{1}{\color{black}\}},(\to_{1})^{A},A {\color{black}\Rightarrow_{16}} {\color{black}[}B,A{\color{black}]},B}}{
\infer[\to\mbox{left}-\to_{1}]
{\color{black}{(\to_{1})^{B},\to_{1},(A)^{B},(\to_{1})^{A},A,{\color{black}\{}\to_{1},A{\color{black}\}} {\color{black}\Rightarrow_{19}} {\color{black}[}B,A{\color{black}]},B}}{
\vdots & \infer[\mbox{restart}-B]
{\color{black}{(\to_{1})^{B},\to_{1},(A)^{B},(\to_{1})^{A},A,{\color{black}\{}\to_{1},A{\color{black}\}} {\color{black}\Rightarrow_{21}} {\color{black}[}B,A{\color{black}]},B}}{
\infer[\mbox{focus}-\to_{1}]
{\color{black}{\to_{1},A,(\to_{1})^{A},{\color{black}\{\}} {\color{black}\Rightarrow_{22}} {\color{black}[}A,B{\color{black}]},B}}{
\infer[\mbox{focus}-A]
{\color{black}{\to_{1},A,(\to_{1})^{A},{\color{black}\{}\to_{1}{\color{black}\}} {\color{black}\Rightarrow_{25}} {\color{black}[}A,B{\color{black}]},B}}{
\infer[\to\mbox{left}-\to_{1}]
{\color{black}{\to_{1},A,(\to_{1})^{A},{\color{black}\{}\to_{1},A{\color{black}\}} {\color{black}\Rightarrow_{26}} {\color{black}[}A,B{\color{black}]},B}}{
\vdots & {\color{black}{\to_{1},A,(\to_{1})^{A},{\color{black}\{}\to_{1},A{\color{black}\}} {\color{black}\Rightarrow_{28}} {\color{black}[}A,B{\color{black}]},B}}}
}
}
}
}
}
}
 & \vdots
}
}
}
}
}
}
 & \vdots
}
}
}
}
\end{equation}

\bigskip\bigskip

From the top sequent of the open branch ($\Rightarrow_{28}$) we can generalize the sequent as:

\begin{displaymath}
  \begin{array}{cccccc}
    \Delta' & \Upsilon_1^{p_1} & \Delta & \Rightarrow_{28} & [p_1, p_2] & q \\
    \{ \imply1, A\} & \imply_1^A & \imply_1, A & \Rightarrow_{28} & [A, B], & B
  \end{array}    
\end{displaymath}

\bigskip

Thus, following the method described in Section~\ref{sec:completeness} we can
extract the counter-model below that falsifies the sequent:

\begin{equation}
\label{model:peirce-from-lmt}
{
\Tree [.{$\mathbf{w_0}$ \\ $\nvDash A$ \\ $\nvDash B$ \\ $\nvDash A \imply_0 B$ \\ $\vDash (A \imply_0 B) \imply_1 A$} [.{$\mathbf{w_{{\Upsilon^A}}}$ \\ $\nvDash A$ \\ $\nvDash B$ \\ $\nvDash A \imply_0 B$ \\ $\vDash (A \imply_0 B) \imply_1 A$} [.{$\mathbf{w_\Delta}$ \\ $\vDash A$ \\ $\nvDash B$ \\ $\nvDash A \imply_0 B$ \\ $\vDash (A \imply_0 B) \imply_1 A$} ]]] 
}
\end{equation}

\bigskip

From Lemma~\ref{lemma:invertible}, we can extend this counter-model to falsify
the initial sequent ($\Rightarrow_{0}$), showing that the Pierce rule does not
hold on \mil.
\end{example}

\begin{example}
  As another example, we can consider the Dummett formula: $(A \imply B) \lor (B
  \imply A)$. It is known that a Kripke counter-model that falsifies this
  formula needs at least two branches in \ipl~and \mpl, so it is also in \mil.
  This example allows us to understand how to mix the right and left premises
  counter-models to falsifies a conclusion sequent of a \imply-left rule.

  As we want to use \lmt, we need to convert the Dummett formula from the form
  above to its implicational version. We use here the general translation
  presented in~\cite{haeusler2014a}. Thus, the translated version is a formula
  $\alpha$ as follows:

\bigskip

{\small
$\alpha \equiv ({((A \imply B) \imply A)} \imply {(((B \imply A) \imply A) \imply (C \imply A))}) \imply {(({((A \imply B) \imply B)} \imply {(((B \imply A) \imply B) \imply (C \imply B))}) \imply C)}$
}

\bigskip

To shorten the presentation, consider the following abbreviations:

\begin{displaymath}
\begin{array}{lll}
  \alpha_1 & = & ((A \imply B) \imply A) \imply (((B \imply A) \imply A) \imply (C \imply A)) \\
  \alpha_2 & = & ((A \imply B) \imply B) \imply (((B \imply A) \imply B) \imply (C \imply B))
\end{array}
\end{displaymath}

\bigskip

The tree~\eqref{proof:counter-model-gen-dummet}, presents a shortened version of
a completely expanded attempt proof tree in \lmt for the Dummet formula in
implicational form. We concentrate in show the application of the operational
rules ({\imply-left} and {\imply-right}). We removed from the tree the focus
area on the left of each sequent and the applications of structural rules. We
also exclude from the tree the labeled formulas, just showing the necessary
formulas for the specific point in the tree. The list of sequents above the
boxed sequents in the tree represents the top sequents of each branch after the total
expansion of $\alpha_2$.

\begin{landscape}
  \thispagestyle{empty}
  \begin{adjustwidth}{-5em}{-5em}
    \begin{gather}
      \AxiomC{$\overwrite{red}{\Seq{\ldots}{[C, A], A~(M_1)}}{$\Seq{\ldots}{[B,C], A}$ $\cancel{\Seq{\ldots,B}{[C], B}}$  $\Seq{\ldots}{[C,B],B}$  $\Seq{\ldots,A}{[C], B}$  $\Seq{\ldots}{[C], C}$  $\Seq{\ldots,B}{[ ], C}$}$}  
      \AxiomC{$\overwrite{red}{\Seq{\ldots, B}{[C], A~(M_2)}}{$\Seq{\ldots,B}{[B,C], A}$ $\cancel{\Seq{\ldots,B}{[C], B}}$  $\cancel{\Seq{\ldots,B}{[C,B],B}}$  $\cancel{\Seq{\ldots, B, A}{[C], B}}$  $\Seq{\ldots,B}{[C], C}$  $\Seq{\ldots,B}{[ ], C}$}$}
      \BinaryInfC{$\Seq{\ldots, A \imply B}{[C], A}$} 
      \UnaryInfC{$\Seq{\ldots, }{[C], (A \imply B) \imply A}$}
      \AxiomC{$\overwrite{red}{\Seq{\ldots}{[C, A], q~(M_1)}}{$\Seq{\ldots}{[B,C], A}$ $\cancel{\Seq{\ldots,B}{[C], B}}$  $\Seq{\ldots}{[C,B],B}$  $\Seq{\ldots,A}{[C], B}$  $\Seq{\ldots}{[C], C}$  $\Seq{\ldots,B}{[ ], C}$}$}
      \AxiomC{$\cancel{\Seq{\ldots, A}{[C], A}}$} 
      \BinaryInfC{$\Seq{\ldots, B \imply A}{[C], A}$}
      \UnaryInfC{$\Seq{\ldots, }{[C], (B \imply A) \imply A}$}
      \AxiomC{$\overwrite{red}{\Seq{\ldots, }{[C], C~(M_1)}}{$\Seq{\ldots}{[B,C], A}$ $\cancel{\Seq{\ldots,B}{[C], B}}$  $\Seq{\ldots}{[C,B],B}$  $\Seq{\ldots,A}{[C], B}$  $\Seq{\ldots}{[C], C}$  $\Seq{\ldots,B}{[ ], C}$}$}
      \AxiomC{$\overwrite{red}{\Seq{\ldots, A}{[ ], C~(M_3)}}{$\cancel{\Seq{\ldots,A}{[B,C], A}}$ $\cancel{\Seq{\ldots,A,B}{[C], B}}$  $\Seq{\ldots,A}{[C,B],B}$  $\Seq{\ldots,A}{[C], B}$  $\Seq{\ldots,A}{[C], C}$  $\Seq{\ldots,A,B}{[ ], C}$}$} 
      \BinaryInfC{$\Seq{\ldots, C \imply A}{[ ], C}$}                                                                                   
      \BinaryInfC{$\Seq{\ldots, ((B \imply A) \imply A) \imply (C \imply A)}{[ ], C}$}    
      \RightLabel{$\imply-\text{left}_{(\alpha_1, C)}$}
      \BinaryInfC{$\Seq{((A \imply B) \imply A) \imply (((B \imply A) \imply A) \imply (C \imply A)), \alpha_2}{[], C}$}
      \UnaryInfC{$\Seq{\alpha_1, \alpha_2}{[], C}$}          
      \UnaryInfC{$\Seq{}{[], \alpha_1 \imply (\alpha_2 \imply C)}$}          
      \UnaryInfC{$\Seq{}{[], \alpha}$}       
      \DisplayProof
      \label{proof:counter-model-gen-dummet}
    \end{gather}
    \end{adjustwidth}
\end{landscape}

From the open branches of the tree~\eqref{proof:counter-model-gen-dummet} we
extract the following three models. Models $M_1$ repeats in some branches. Thus,
we only represent it once here. We indicated in
tree~\eqref{proof:counter-model-gen-dummet}, in the top sequents of each branch,
the corresponding model generated on it, following the
Definition~\ref{def:countermodel}.

\begin{displaymath}    
{
  \Tree [.{$M_1$ \\ $\mathbf{w_{0}}$} 
                 [.{$\mathbf{w_{1}}$ \\ $\nvDash C$} [.{$\mathbf{w_{11}}$ \\ $\nvDash B$} {$\mathbf{w_{12}}$ \\ $\vDash A$} ]]
                 [.{$\mathbf{w_{2}}$ \\ $\nvDash C$} [.{$\mathbf{w_{21}}$ \\ $\nvDash B$} {$\mathbf{w_{22}}$ \\ $\vDash B$} ]]
                 [.{$\mathbf{w_{3}}$ \\ $\nvDash C$} {$\mathbf{w_{31}}$ \\ $\vDash A$ \\ $\nvDash B$} ]  
                 {$\mathbf{w_{4}}$ \\ $\nvDash C$}
                 {$\mathbf{w_{5}}$ \\ $\nvDash C$ \\ $\vDash B$}
        ]
}
\label{model:dummett-bifurcation-1}
\end{displaymath}

\begin{displaymath}    
{
  \Tree [.{$M_2$ \\ $\mathbf{w_{0}'}$} 
                 [.{$\mathbf{w_{1}'}$ \\ $\nvDash C$} [.{$\mathbf{w_{11}'}$ \\ $\nvDash B$} {$\mathbf{w_{12}'}$ \\ $\vDash B$ \\ $\nvDash A$} ]]
                 [.{$\mathbf{w_{2}'}$ \\ $\nvDash C$} {$\mathbf{w_{21}'}$ \\ $\vDash B$ \\ $\nvDash C$} ]  
        ]
}
\label{model:dummett-bifurcation-2}
\end{displaymath}

\begin{displaymath}    
{
  \Tree [.{$M_3$ \\ $\mathbf{w_{0}''}$} 
                 [.{$\mathbf{w_{1}''}$ \\ $\nvDash C$} [.{$\mathbf{w_{11}''}$ \\ $\nvDash B$} {$\mathbf{w_{12}''}$ \\ $\vDash A$ \\ $\nvDash B$} ]]
                 [.{$\mathbf{w_{2}''}$ \\ $\nvDash C$} [.{$\mathbf{w_{21}''}$ \\ $\nvDash B$} {$\mathbf{w_{22}''}$ \\ $\vDash A$} ]]
                 [.{$\mathbf{w_{3}''}$ \\ $\nvDash C$} {$\mathbf{w_{31}''}$ \\ $\vDash A$ \\ $\nvDash C$} ]
                 [.{$\mathbf{w_{4}''}$ \\ $\nvDash C$} {$\mathbf{w_{41}''}$ \\ $\vDash A$ \\ $\nvDash B$ \\ $\nvDash C$} ]                   
        ]
}
\label{model:dummett-bifurcation-3}
\end{displaymath}

Therefore, at the point in the tree where the rule \imply-left is applied to the
context $(\alpha_1, C)$ (the single labeled rule application in the
tree~\eqref{proof:counter-model-gen-dummet}) we have the join of these models.
Counter-model $M_4$ below represent this unification (to legibility we remove
repeated branches in $M_4$). Thus, $M_4 \nvDash \alpha$:

\begin{displaymath}    
{
  \Tree [.{$M_4$ \\ $\mathbf{w_{0}*}$} 
                 [.{$\mathbf{w_{1}*}$ \\ $\nvDash C$} [.{$\mathbf{w_{11}*}$ \\ $\nvDash B$} {$\mathbf{w_{12}*}$ \\ $\vDash A$ \\ $\nvDash B$} ]]
                 [.{$\mathbf{w_{2}*}$ \\ $\nvDash C$} [.{$\mathbf{w_{21}*}$ \\ $\nvDash B$} {$\mathbf{w_{22}*}$ \\ $\nvDash A$ \\ $\vDash B$} ]]
                 [.{$\mathbf{w_{3}*}$ \\ $\nvDash C$} {$\mathbf{w_{31}*}$ \\ $\vDash A$ \\ $\nvDash B$ \\ $\nvDash C$} ]  
                 {$\mathbf{w_{4}*}$ \\ $\nvDash C$ \\ $\vDash B$}
                 [.{$\mathbf{w_{5}*}$ \\ $\nvDash C$} {$\mathbf{w_{51}*}$ \\ $\vDash A$ \\ $\vDash B$ \\ $\nvDash C$} ]
        ]
}
\label{model:dummett-bifurcation-4}
\end{displaymath}

\end{example}

\section{Conclusion}
\label{sec:conc}
We presented here the definition of a sequent calculus for proof search in the
context of the Propositional Minimal Implicational Logic (\mil). Our calculus,
called \lmt, aims to perform the proof search for \mil formulas in a bottom-up,
forward-always approach. Termination of the proof search is achieved without
using loop checkers during the process. \lmt is a deterministic process, which
means that the system does not need an explicit backtracking mechanism to be
complete. In this sense, \lmt is a bicomplete process, generating Kripke
counter-models from search trees produced by unsuccess proving processes.

In the definition of the calculus, we also presented some translations between
deductive systems for \mil: ND to \ljarrow and \ljarrow to \lmt. We also
established a relation between \lmt and Fitting's Tableaux Systems for \mil
regards the counter-model generation in those systems.

We keep the development of a theorem prover for \lmt in
(\url{https://github.com/jeffsantos/GraphProver}).

As future work, we can enumerate some features to be developed or
extended in the system as well as some new research topics that can be
initialized.

\begin{itemize}
\item \textbf{Precise upper bound for termination} The upper bound used here for
  achieving termination in \lmt is a very high bound. Many non-theorems can be
  identified in a small number of steps. We can still explore options to shorten
  the size of the proof search tree. Even with theorems, our labeling
  mechanism, in conjunction with the usage of the restart rule, produces many
  repetitions in the proof tree.
  
\item \textbf{Compression and sharing} Following the techniques proposed by
  \cite{gordeev2016} we can explore new ways to shorten the size of proofs
  generated by \lmt.
  
\item \textbf{Minimal counter-models} The size of the generated counter-model in
  \lmt still takes into account every possible combination of subformulas,
  yielding Kripke models with quite a lot of worlds. There is still work to be
  done in order to produce smaller models. \cite{stoughton1996}~presents an
  implementation of the systems in~\cite{dyckhoff1992} and in~\cite{pinto1995}
  with the property of ``minimally sized, normal natural deduction proofs of the
  sequent, or it finds a "small" tree-based Kripke counter-model of the
  sequent'' using the words of the author. These references can be a good start
  point to improve~\lmt counter-model generation.
  
\end{itemize}

\bibliographystyle{apalike}
\bibliography{ref}

\begin{thebibliography}{}

\bibitem[Andreoli, 1992]{andreoli1992}
Andreoli, J.-M. (1992).
\newblock Logic programming with focusing proofs in linear logic.
\newblock {\em Journal of Logic and Computation}, 2(3):297--347.

\bibitem[Dowek and Jiang, 2006]{Dowek2006}
Dowek, G. and Jiang, Y. (2006).
\newblock Eigenvariables, bracketing and the decidability of positive minimal
  predicate logic.
\newblock {\em Theoretical Computer Science}, 360(1):193--208.

\bibitem[Dyckhoff, 1992]{dyckhoff1992}
Dyckhoff, R. (1992).
\newblock Contraction-free sequent calculi for intuitionistic logic.
\newblock {\em The Journal of Symbolic Logic}, 57(03):795--807.

\bibitem[Dyckhoff, 2016]{dyckhoff2016}
Dyckhoff, R. (2016).
\newblock Intuitionistic decision procedures since gentzen.
\newblock In {\em Advances in Proof Theory}, pages 245--267. Springer.

\bibitem[Dyckhoff and Lengrand, 2006]{dyckhoff2006}
Dyckhoff, R. and Lengrand, S. (2006).
\newblock {LJQ}: a strongly focused calculus for intuitionistic logic.
\newblock In {\em Logical Approaches to Computational Barriers}, pages
  173--185. Springer.

\bibitem[Ferrari et~al., 2013]{ferrari2013}
Ferrari, M., Fiorentini, C., and Fiorino, G. (2013).
\newblock Contraction-free linear depth sequent calculi for intuitionistic
  propositional logic with the subformula property and minimal depth
  counter-models.
\newblock {\em Journal of automated reasoning}, 51(2):129--149.

\bibitem[Gentzen, 1935]{gentzen1935}
Gentzen, G. (1935).
\newblock Untersuchungen {\"u}ber das logische schlie{\ss}en. i.
\newblock {\em Mathematische zeitschrift}, 39(1):176--210.

\bibitem[Gordeev and Haeusler, 2016]{gordeev2016}
Gordeev, L. and Haeusler, E.~H. (2016).
\newblock {NP} vs {PSPACE}.
\newblock {\em arXiv preprint arXiv:1609.09562}.

\bibitem[Haeusler, 2015]{haeusler2014a}
Haeusler, E.~H. (2015).
\newblock Propositional logics complexity and the sub-formula property.
\newblock {\em Electronic Proceedings in Theoretical Computer Science},
  179:1--16.
\newblock Proceedings of DCM2014, Vienna, 2014.

\bibitem[Herbelin, 1995]{herbelin1995}
Herbelin, H. (1995).
\newblock A $\lambda$-calculus structure isomorphic to {G}entzen-style sequent
  calculus structure.
\newblock In {\em Computer Science Logic}, pages 61--75. Springer.

\bibitem[Heuerding et~al., 1996]{heuerding1996}
Heuerding, A., Seyfried, M., and Zimmermann, H. (1996).
\newblock Efficient loop-check for backward proof search in some non-classical
  propositional logics.
\newblock In {\em Theorem Proving with Analytic Tableaux and Related Methods},
  pages 210--225. Springer.

\bibitem[Hirokawa, 1991]{Hirokawa1991}
Hirokawa, S. (1991).
\newblock Number of proofs for implicational formulas.
\newblock {\em Introduction to Mathematical Analysis (in Japanese)},
  772:72--74.

\bibitem[Howe, 1997]{howe1997}
Howe, J.~M. (1997).
\newblock Two loop detection mechanisms: a comparison.
\newblock In {\em Automated Reasoning with Analytic Tableaux and Related
  Methods}, pages 188--200. Springer.

\bibitem[Hudelmaier, 1993]{hudelmaier1993}
Hudelmaier, J. (1993).
\newblock An {O}(n log n)-space decision procedure for intuitionistic
  propositional logic.
\newblock {\em Journal of Logic and Computation}, 3(1):63--75.

\bibitem[Liang and Miller, 2007]{liang2007}
Liang, C. and Miller, D. (2007).
\newblock Focusing and polarization in intuitionistic logic.
\newblock In {\em Computer Science Logic}, pages 451--465. Springer.

\bibitem[Pinto and Dyckhoff, 1995]{pinto1995}
Pinto, L. and Dyckhoff, R. (1995).
\newblock Loop-free construction of counter-models for intuitionistic
  propositional logic.
\newblock In {\em Symposia Gaussiana, Conf {A}}, pages 225--232. Walter de
  Gruyter \& {C}o (Berlin).

\bibitem[Prawitz, 2006]{prawitz1965}
Prawitz, D. (2006).
\newblock {\em Natural deduction: A proof-theoretical study}.
\newblock Courier Dover Publications.

\bibitem[Santos et~al., 2016]{santos2016}
Santos, J. d.~B., Vieira, B.~L., and Haeusler, E.~H. (2016).
\newblock A unified procedure for provability and counter-model generation in
  minimal implicational logic.
\newblock {\em Electronic Notes in Theoretical Computer Science}, 324:165--179.

\bibitem[Seldin, 1998]{seldin1998}
Seldin, J.~P. (1998).
\newblock Manipulating proofs.

\bibitem[Statman, 1974]{statman1974}
Statman, R. (1974).
\newblock {\em Structural Complexity of Proofs}.
\newblock {PhD} thesis, Stanford University.

\bibitem[Stoughton, 1996]{stoughton1996}
Stoughton, A. (1996).
\newblock Porgi: a proof-or-refutation generator for intuitionistic
  propositional logic.
\newblock In {\em CADE Workshop on Proof-search in Type-theoretic Languages},
  pages 109--116.

\bibitem[Underwood, 1990]{underwood1990}
Underwood, J. (1990).
\newblock A constructive completeness proof for intuitionistic propositional
  calculus.
\newblock Technical report, Cornell University.

\bibitem[Vorob’ev, 1970]{vorobev1970}
Vorob’ev, N.~N. (1970).
\newblock A new algorithm for derivability in the constructive propositional
  calculus.
\newblock {\em American Mathematical Society Translations}, 94(2):37--71.

\bibitem[Weich, 1998]{weich1998}
Weich, K. (1998).
\newblock Decision procedures for intuitionistic propositional logic by program
  extraction.
\newblock In {\em Automated Reasoning with Analytic Tableaux and Related
  Methods}, pages 292--306. Springer.

\end{thebibliography}

\end{document}